\newtheorem{definition}{Definition}
\newtheorem{theorem}{Theorem}  
\newtheorem{remark}{Remark}
\newtheorem{lemma}{Lemma}
\newtheorem{proof}{Proof}
\begin{document}

\title{Bandwidth-Efficient Two-Server ORAMs with $O(1)$ Client Storage}

\author{Wei Wang,
        Xianglong Zhang,
        Peng Xu,
        Rongmao Chen,
        Laurence Tianruo Yang

\thanks{Wei Wang and Xianglong Zhang are with the School of Computer Science and
 Technology, Huazhong University of Science and Technology, Wuhan 430074,
 China (e-mail: viviawangwei@hust.edu.cn; vrwudi@gmail.com).}
\thanks{Peng Xu is with the Hubei Key Laboratory of Distributed System Security, School of Cyber Science and Engineering, Huazhong University of Science and Technology, Wuhan 430074, China, and also with the Jinyinhu Laboratory, Wuhan 430040, China (e-mail: xupeng@mail.hust.edu.cn).}
\thanks{Rongmao Chen is with the School of Computer, National University of Defense Technology, Chang sha 410073, China (e-mail: chromao@nudt.edu.cn).}
\thanks{ Laurence T. Yang is with the School of Computer Science and Technology,
 Huazhong University of Science and Technology, Wuhan 430074, China,
 also with the School of Computer and Artificial Intelligence, Zhengzhou
 University, Zhengzhou 450001, China, and also with the Department of
 Computer Science, St. Francis Xavier University, Antigonish, NS B2G 2W5,
 Canada (e-mail: ltyang@ieee.org).}
}

\maketitle

\begin{abstract}
Oblivious RAM (ORAM) allows a client to securely retrieve elements from outsourced servers without leakage about the accessed elements or their virtual addresses. Two-server ORAM, designed for secure two-party RAM computation, stores data across two non-colluding servers. However, many two-server ORAM schemes suffer from excessive local storage or high bandwidth costs. To serve lightweight clients, it is crucial for ORAM to achieve concretely efficient bandwidth while maintaining $O(1)$ local storage. Hence, this paper presents two new client-friendly two-server ORAM schemes that achieve practical logarithmic bandwidth under $O(1)$ local storage, while incurring linear symmetric key computations. The core design features a hierarchical structure and a pairwise-area setting for the elements and their tags. Accordingly, we specify efficient read-only and write-only private information retrieval (PIR) algorithms in our schemes to ensure obliviousness in accessing two areas respectively, so as to avoid the necessity of costly shuffle techniques in previous works. We empirically evaluate our schemes against LO13 (TCC'13), AFN17 (PKC'17), and KM19 (PKC'19) in terms of both bandwidth and time cost. The results demonstrate that our schemes reduce bandwidth by approximately $2\sim{4}\times$ compared to LO13, and by $16\sim{64}\times$ compared to AFN17 and KM19. For a database of size $2^{14}$ blocks, our schemes are over $64\times$ faster than KM19, while achieving similar performance to LO13 and AFN17 in the WAN setting, with a latency of around $1$ second.
\end{abstract}

\begin{IEEEkeywords}
Oblivious RAM, distributed point function, private information retriveal.
\end{IEEEkeywords}

\section{Introduction}
The development of outsourced storage has encouraged lightweight clients to store private data on encrypted databases, such as Always Encrypted \cite{AWE} and MongoDB \cite{MongoDB}. This has raised concerns about access privacy beyond data security, as these systems reveal access patterns when clients retrieve data from the database. Studies \cite{DBLP:conf/ndss/IslamKK12, DBLP:conf/ccs/CashGPR15, DBLP:conf/ccs/KellarisKNO16} show that such leakage leads to inference attacks and compromises database security.

\begin{table*}[!t]
{
    \begin{center}
    \caption{Comparisons of two-server ORAM schemes on client storage, communication cost, bandwidth, and computations.}
    \label{CT} 
    \begin{threeparttable}
    \setlength{\tabcolsep}{3.2mm}
    {
    \begin{tabular}{|l|c|c|c|c|c|c|c|}
        \hline
        \textbf{Scheme} & \textbf{\makecell[c]{Client Storage\\ (\# block)}} & \textbf{\makecell[c]{Communication\\ (bit)}} & \textbf{\makecell[c]{Bandwidth\tnote{$\dagger$}\\ (\# block)}} & \textbf{\makecell[c]{Block size\\ (bit)}} & \textbf{\makecell[c]{Computation\\ (server)}} & \textbf{\makecell[c]{Computation\\ (client)}}
        \\
        \hline
        \hline
        GKW18 \cite{DBLP:conf/asiacrypt/GordonKW18} & $O(\log{N})$ & $O(B\log{N})$ & $\sim{10}\log{N}$  & $\Omega(\log{N})$ & linear & polylog\\
        \hline
        \hline
        LO13 \cite{DBLP:conf/tcc/LuO13} & $O(1)$ & $O(B\log{N})$ & $\sim{160}\log{N}$  & $\Omega(\log{N})$ & polylog & polylog\\
        \hline
        AFN17 \cite{DBLP:conf/pkc/AbrahamFNP017} & $O(1)$ & $O(\frac{B\log{N}+\log^{4}{N}}{\log\log{N}})$ & $O(\frac{\log{N}}{{\log\log{N}}})$  & $\Omega(\log^3{N})$ & polylog & polylog\\
        \hline
        KM19\tnote{$\ddagger$}\ \  \cite{DBLP:conf/pkc/KushilevitzM19}  & $O(1)$ & $O(\frac{B\log{N}+\log^{3}{N}}{\log\log{N}})$ & $O(\frac{\log{N}}{{\log\log{N}}})$  & $\Omega(\log^2{N})$ & polylog & polylog\\
        \hline
        \hline
        Cforam\tnote{$\S$} & $O(1)$ & $O(B\log{N}+\kappa\log^{2}{N})$ & $\sim{24}\log{N}$ & $\Omega(\kappa\log{N})$ & linear & polylog\\
        \hline
        Cforam+ & $O(1)$ & $O(B\log{N})$ & $\sim{34}\log{N}$ & $\Omega(\log{N})$ & linear & polylog\\
        \hline
    \end{tabular}
               
    \begin{tablenotes}
        \item[$\dagger$] The bandwidth consumption assumes blocks of at least the specified size. Thus, a scheme is considered more efficient and scalable if it achieves the claimed bandwidth with a smaller block size. For example, given an array with $N = 2^{20}$ blocks, each block of size $32$ bytes, AFN17 and KM19 consume over $512$ KB per access, while our schemes' consumption does not exceed $16$ KB.

        \item[$\ddagger$] To realize two-server ORAM, KM19 \cite{DBLP:conf/pkc/KushilevitzM19} needs to construct an oblivious hashing based on oblivious sort. Using Zig-zag sort \cite{DBLP:conf/stoc/Goodrich14} can achieve sub-logarithmic bandwidth, but with a large constant. Leveraging Bitonic sort \cite{DBLP:conf/afips/Batcher68} increases both the bandwidth and roundtrips to $\omega(\log{N})$.

        \item[$\S$] $\kappa$ represents the DPF key length.
        
    \end{tablenotes}
    }
    \end{threeparttable} 
    \end{center}
}
\end{table*}

To address the issue above, Oblivious RAM (ORAM), originally proposed by Goldreich and Ostrovsky \cite{DBLP:journals/jacm/GoldreichO96}, has been widely regarded as an effective countermeasure since it can fully hide the access patterns. Specifically, ORAM enables a client to outsource his/her database to untrusted servers while performing read/write operations obliviously such that the access patterns, i.e., accessed memory locations, leak no information about the operations. Over the years, ORAM has attracted widespread attention and spawned various branches of study, such as lower bound \cite{DBLP:conf/tcc/WeissW18,DBLP:conf/crypto/LarsenN18,DBLP:conf/innovations/BoyleN16,DBLP:conf/crypto/KomargodskiL21}, optimal bandwidth \cite{DBLP:conf/soda/KushilevitzLO12,DBLP:conf/ccs/StefanovDSFRYD13,DBLP:conf/focs/PatelP0Y18,DBLP:conf/eurocrypt/AsharovKLNPS20,DBLP:conf/crypto/AsharovKLS21}, differential obliviousness \cite{DBLP:conf/soda/ChanCMS19,DBLP:conf/approx/BeimelNZ19,DBLP:conf/acns/GordonKLX22,DBLP:conf/eurocrypt/ZhouSCM23}, locality preserving \cite{DBLP:conf/eurocrypt/AsharovCNP0S19,DBLP:conf/ndss/ChakrabortiACMR19}, malicious security \cite{DBLP:journals/jacm/GoldreichO96,DBLP:conf/ccs/StefanovDSFRYD13,DBLP:conf/hpec/RenFYDD13,DBLP:conf/crypto/MathialaganV23}, distributed settings \cite{DBLP:conf/tcc/LuO13,DBLP:conf/pkc/KushilevitzM19,DBLP:conf/asiacrypt/GordonKW18,DBLP:conf/pkc/HamlinV21,DBLP:conf/ccs/DoernerS17,DBLP:conf/sp/ZahurW0GDEK16,DBLP:conf/uss/VadapalliHG23}, and more.

However, ORAM has long been regarded as a costly approach to prevent access pattern leakage. It imposes significant overheads, either in bandwidth\footnote{In ORAM, \textit{bandwidth} refers to the amount of data blocks transferred between the client and the servers during an access operation.} \cite{DBLP:journals/jacm/GoldreichO96} or local storage \cite{DBLP:conf/ccs/StefanovDSFRYD13}, making it impractical for resource-constrained clients, including those using trusted execution environments (TEEs) \cite{DBLP:conf/ndss/SasyGF18} or wireless devices. To address these challenges, researchers shifted their focus to improving efficiency, leading to the development of hierarchical ORAM \cite{DBLP:journals/jacm/GoldreichO96} and numerous optimizations \cite{DBLP:conf/soda/KushilevitzLO12,DBLP:conf/focs/PatelP0Y18,DBLP:conf/eurocrypt/AsharovKLNPS20,DBLP:conf/crypto/AsharovKLS21}. Proposed by Asharov et al., OptORAMa \cite{DBLP:conf/eurocrypt/AsharovKLNPS20} achieves $O(\log{N})$ bandwidth and $O(1)$ local storage for a database of size $N$. Yet, like most hierarchical ORAMs \cite{DBLP:journals/jacm/GoldreichO96,DBLP:conf/asiacrypt/ChanGLS17,DBLP:conf/focs/PatelP0Y18}, the bandwidth has a large constant in front of the $O$ ($\sim{10}^{4}$ \cite{DBLP:conf/scn/DittmerO20}), limiting its practicality.

To achieve more efficient oblivious access, the ORAM deployment across multiple servers has been proposed. Following the definition in \cite{DBLP:conf/pkc/KushilevitzM19}, we represent a multi-server ORAM with a parameter tuple $(m,t)$, where $m > 1$ denotes the total number of servers, and $t < m$ specifies the maximum number of colluding servers tolerated. In this paper, we focus on the two-server setting\footnote{This paper considers the standard two-server scenario, where the client interacts with two servers independently. A similar concept, distributed ORAM, involves server-to-server communication to process client requests. See details in Appendix \ref{AppDiss}.}. The two-server model strikes a balance between practicality and efficiency \cite{DBLP:conf/tcc/LuO13,DBLP:conf/asiacrypt/GordonKW18,DBLP:conf/pkc/AbrahamFNP017,DBLP:conf/pkc/KushilevitzM19}, offering significantly better performance than single-server solutions \cite{DBLP:conf/ccs/StefanovDSFRYD13,DBLP:conf/eurocrypt/AsharovKLNPS20}, while incurring lower deployment costs than configurations with $m \geq 3$ servers \cite{DBLP:conf/asiacrypt/ChanKNPS18,DBLP:conf/pkc/KushilevitzM19,DBLP:conf/acns/JareckiW18,DBLP:conf/uss/VadapalliHG23}. Moreover, the two-server setting is well-suited for two-party RAM computation scenarios \cite{OS97,DBLP:conf/tcc/LuO13,DBLP:conf/ccs/DoernerS17,DBLP:conf/pkc/HamlinV21,DBLP:conf/uss/VadapalliHG23}.

Most existing two-server ORAM schemes adopt shuffle-based strategies to achieve obliviousness. For example, Lu and Ostrovsky \cite{DBLP:conf/tcc/LuO13} developed the first $(2,1)$-ORAM scheme that transforms all elements in hash tables between the client and servers for shuffling, consuming approximately $160\log{N}$ bandwidth.
Kushilevitz and Mour \cite{DBLP:conf/pkc/KushilevitzM19} introduced a three-server ORAM scheme based on a balanced hierarchical structure and extended it to the two-server setting by oblivious hashing, which incurs significant bandwidth and roundtrip costs.
Abraham et al. \cite{DBLP:conf/pkc/AbrahamFNP017} constructed a tree-based ORAM instance with sub-logarithmic bandwidth. The instance leverages a $d$-ary tree to store elements and requires a recursive ORAM to store a position map. It cannot achieve practical bandwidth for small block sizes, e.g., $o(\log^{3}{N})$. In contrast, a computation-based ORAM scheme introduced by Gordon et al. \cite{DBLP:conf/asiacrypt/GordonKW18}, achieves low bandwidth and linear computation overheads. However, similar to path ORAM \cite{DBLP:conf/ccs/StefanovDSFRYD13}, it requires $O(\log{N})$ local storage.

Given the current landscape of two-server ORAMs and the needs of lightweight clients, we pose the following question:

\textit{Is it possible to construct a client-friendly two-server ORAM scheme that  achieves $O(1)$ local storage and concretely efficient bandwidth for small data blocks?}

\subsection{Contributions}

In this paper, we answer the above question affirmatively by introducing two schemes: Cforam and Cforam+. A comparison of our schemes with related works is provided in Table \ref{CT}.

We introduce a two-server ORAM model and construct Cforam and Cforam+ to achieve our goals. At a high level, Cforam adopts a hierarchical structure, which helps maintain $O(1)$ local storage. To reduce the high bandwidth overhead caused by shuffling, it employs a pairwise-area setting to store elements with tags and distributed point function (DPF)-based PIR for efficient access. Additionally, by exploring cyclic shift optimization of PIR, we derive Cforam+, which achieves practical $O(\log{N})$ bandwidth for small logarithmic block sizes. We implement our schemes and compare them with LO13 \cite{DBLP:conf/tcc/LuO13}, AFN17 \cite{DBLP:conf/pkc/AbrahamFNP017}, and KM19 \cite{DBLP:conf/pkc/KushilevitzM19} under various settings. Cforam and Cforam+ reduce bandwidth by $2\sim4\times$ compared to LO13 and by $16\sim64\times$ compared to AFN17 and KM19. In terms of time cost, while our schemes incur moderately higher latency than LO13 in the LAN setting due to linear symmetric key computations, they maintain comparable performance to LO13 and AFN17 in the WAN setting and are more than $16\times$ faster than KM19.

We examine a realistic tradeoff between bandwidth and computational resources. For lightweight clients, such as mobile phones, affordable and secure outsourced services are often more viable than enhancing their own capabilities.
For instance, consider two Amazon EC2 servers\footnote{\url{https://aws.amazon.com/ec2/pricing/on-demand/}} priced at $\$0.023$/CPU-hour, alongside a client using Mint Mobile\footnote{\url{https://www.mintmobile.com/plans/}} with a plan costing $\$15$ for $5$GB data per month, equating to approximately $m\$3$/MB on average\footnote{Here, $m\$$ denotes milli-dollars, where $m\$1 = \$0.001$.}. In the WAN setting with a database size of $N=2^{14}$, the cost per access is $m\$0.07$ for LO13, and exceeds $m\$1$ for AFN17 and KM19. In contrast, our schemes reduce the cost to about $m\$0.03$, making them more affordable and practical than others.

\subsection{Technical Overview}\label{SecDG}

We outline the essential concepts and techniques employed: a hierarchical structure and symmetric encryption for element storage, two-server PIR to obfuscate access patterns while avoiding costly oblivious hashing, and a pairwise-area design and cyclic shift optimization for efficient overhead.

\smallskip

\noindent\textbf{Hierarchical structure.} Our starting point is a hierarchical structure consisting of $1+\log{N}$ levels, each associated with a hash table, denoted as $T_{0},...,T_{\log{N}}$, for managing $N$ elements. Each table $T_{i}$ can store up to $2^{i}$ elements. This structure is employed in hierarchical ORAM, which includes three phases: Setup, Access, and Rebuild, as detailed in Appendix \ref{AppHORAM}. To achieve oblivious access in this structure, two key requirements must be satisfied: (i) indistinguishability of elements; (ii) indistinguishability of memory access locations.

\smallskip

\noindent\textbf{Encrypted  elements.} 
To satisfy requirement (i), all elements need to be encrypted with symmetric encryption schemes, such as AES, before being stored on the servers. Additionally, the client must re-encrypt the elements before sending them to the servers during the access and rebuild phases. The encryption and re-encryption processes are essential and commonly applied as fundamental procedures in most ORAMs \cite{DBLP:journals/jacm/GoldreichO96,DBLP:conf/tcc/LuO13,DBLP:conf/pkc/KushilevitzM19}.

\smallskip

\noindent\textbf{Oblivious hashing vs. Two-server PIR.} To meet requirement (ii), the client must access seemingly random locations within each level for each query. Otherwise, an untrusted server may trace accessed locations and infer client requests.

Oblivious hashing is a viable solution for obfuscating accessed locations. However, most oblivious hashing schemes \cite{DBLP:conf/asiacrypt/ChanGLS17,DBLP:conf/icalp/GoodrichM11,DBLP:conf/soda/KushilevitzLO12} rely on costly oblivious sorting techniques. These methods include bitonic sort \cite{DBLP:conf/afips/Batcher68}, which requires $O(n\log^{2}{n})$ bandwidth, and AKS sort \cite{DBLP:conf/stoc/AjtaiKS83} and Zig-zag sort \cite{DBLP:conf/stoc/Goodrich14}, which achieve $O(n\log{n})$ bandwidth but with a large constant factor. Other schemes that avoid oblivious sorting still rely on expensive random shuffle algorithms \cite{DBLP:conf/focs/PatelP0Y18,DBLP:conf/eurocrypt/AsharovKLNPS20,DBLP:conf/crypto/AsharovKLS21}, causing performance bottlenecks for hierarchical ORAM.

To overcome the limitations of oblivious hashing, a practical alternative is to leverage two-server read-only PIR \cite{DBLP:conf/pkc/KushilevitzM19} for each level, preventing the servers from learning the client’s requests. With this approach, we can use a standard hash table to store elements, avoiding costly oblivious hashing. In Appendix \ref{AppPIR}, we show how to construct a read-only PIR scheme using the distributed point function (DPF) \cite{DBLP:conf/eurocrypt/BoyleGI15,DBLP:conf/ccs/BoyleGI16}, while achieving low bandwidth overhead.

\smallskip

\noindent\textbf{Elements deduplication.} Although two-server PIR can effectively hide access patterns at each level, it alone cannot construct a complete ORAM. A fundamental step in hierarchical ORAM is that each accessed element must be written back to the first level of the hierarchical structure, without being deleted from its original level. This results in duplicated copies with the same virtual addresses across different levels. Therefore, deduplication is crucial. Two main approaches exist: (i) Oblivious sorting \cite{DBLP:journals/jacm/GoldreichO96,boyle2015oblivious} during the rebuild, which organizes elements by their virtual addresses and sequentially removes duplicates. This method incurs significant bandwidth overhead, making it impractical. (ii) Overwriting the accessed position with a dummy element during the access \cite{DBLP:conf/tcc/LuO13,DBLP:conf/asiacrypt/GordonKW18,DBLP:conf/eurocrypt/AsharovKLNPS20}, which avoids sorting but reveals read/write positions. Asharov et al. \cite{DBLP:conf/crypto/AsharovKLS21} proposed a linear-time oblivious deduplication scheme, assuming that the input array is randomly shuffled. Our design relies on the second method. Since this method exposes write positions, we need to hide them to ensure private deduplication.

\smallskip

\noindent\textbf{Pairwise-area setting.} In the aforementioned context, PIR-write \cite{DBLP:conf/pkc/KushilevitzM19} is a promising strategy for privately writing a value to a covert position. Unfortunately, there is currently no efficient two-server PIR that supports simultaneous read and write operations. Inspired by the $tagging$ technique of Lu and Ostrovsky \cite{DBLP:conf/tcc/LuO13}, we devised a pairwise-area setting for data storage in Cforam, which consists of an element area and a tag area. Both areas follow a hierarchical structure. The element area uses read-only PIR to access elements, while the tag area employs write-only PIR to mark the status of the corresponding elements. This design efficiently solves the deduplication problem while ensuring security.

\smallskip

\noindent\textbf{Cyclic shift optimization.} The aforementioned approach requires the client to generate $O(\log{N})$ DPF keys, each with a size of $O(\kappa\log{N})$, where $\kappa$ denotes the size of a DPF key. For a small block size $B=\Omega(\log{N})$, this results in a bandwidth overhead of $\omega(\log{N})$. To further reduce the bandwidth, we propose a so-called cyclic shift optimization strategy in Cforam+, which reduces the client's burden to send only $O(1)$ DPF keys along with some cyclic shift information, enabling $O(\log{N})$ PIR operations. This optimization achieves logarithmic bandwidth overhead for $B=\Omega(\log{N})$. A detailed construction is provided in Sec. \ref{SecOptScheme}.

\section{Building Blocks}\label{SecPrelim}
This section introduces some frequently used notations and presents foundational building blocks of our schemes, such as cuckoo hashing and two-server PIR.

\subsection{Notations}

\begin{table}[!t]
    \center
    \caption{Notations.} 
    \label{TabNot}
    \begin{threeparttable}
    \setlength{\tabcolsep}{1.8mm}
    {
    \begin{tabular}{|c|l|}
        \hline
        Notations & Descriptions \\
        \hline
        {$B$} & A block (or element) size. \\
        \hline
        {$\Upsilon$} & A tag size. \\
        \hline
        {$N$} & Database size. \\
        \hline
        {$\lambda$} & \makecell[l]{Security parameter of cuckoo hashing, which\\ is equal to $N$ as applied in \cite{DBLP:conf/asiacrypt/ChanGLS17}.}\\
        \hline
        {$\kappa$} & Security parameter of DPF. \\
        \hline
        {$\mathcal{S}_{b}$} & One of the two servers for $b\in\{0,1\}$. \\
        \hline
        $\langle{X}\rangle$ & A replication of the array $X$.\\
        \hline
        $\langle{X}\rangle_{b}$ & $\mathcal{S}_{b}$'s replication of $X$.\\
        \hline
        $\llbracket{X}\rrbracket$ & A secret sharing of the array $X$.\\
        \hline
        ${\llbracket{X}\rrbracket}_{b}$ & $\mathcal{S}_{b}$'s share of $X$.\\
        \hline
        $EX_{b}$ & Storage structure of $\mathcal{S}_{b}$.\\
        \hline
        $Out_{b}$ & Obtained elements from $\mathcal{S}_{b}$ for a request.\\
        \hline
        $AP_{b}$ & Access patterns of $\mathcal{S}_{b}$ for a request.\\
        \hline
        $V_{i}$ & \makecell[l]{The $i$th unit vector, i.e., $V_{i}[j]=1$ if $j\equiv{i}$;\\ otherwise, $V_{i}[j]=0$.} \\
        \hline
        $\bot$ & Null value.\\
        \hline
        $EB|TB$ & Element$|$Tag buffer.\\
        \hline
        $ES|TS$ & Element$|$Tag stash.\\
        \hline
        $ET_{i}|TT_{i}$ & \makecell[l]{Element$|$Tag hash table of the $i$th level, where
        \\ $ET_{i}=(ET_{i0},ET_{i1})\ |\ TT_{i}=(TT_{i0},TT_{i1})$,\\ denoting two tables of cuckoo hashing.} \\
        \hline
        {$LenB$} & Buffer length. \\
        \hline
        {$LenS$} & Stash length. \\
        \hline
        {$Len_{i}$} &  Hash table length of the $i$th level. \\
        \hline
    \end{tabular}

    }
    \end{threeparttable}
\end{table}

We denote $X = (X[0], \ldots, X[n-1])$ as an element array with cardinality $|X| = n$. Each element in the array is described as $(a,v)$, where $a$ is the virtual address, $v$ is the value, and $\tau$ is its associated tag. We
represent $\langle{X}\rangle$ as a replication of $X$ and $\llbracket{X}\rrbracket$ as a secret sharing of $X$. Note that the $\llbracket{X}\rrbracket$ can either be additive or XOR shares. In this paper, we use XOR shares of $X$ for consistency. We use $\mathcal{S}_{b}$ to denote one of the two servers where $b\in\{0,1\}$. For server $\mathcal{S}_{b}$, $EX_{b}$ refers to its storage structure, $Out_{b}$ represents the output elements for an access request, and $AP_{b}$ refers to the memory access pattern, i.e., a sequence of accessed memory locations. Additional notations are provided in Table \ref{TabNot}.

\subsection{Cuckoo Hashing}\label{SecCuckoo}
Cuckoo hashing \cite{DBLP:journals/jal/PaghR04} is a powerful primitive that enables storing elements in a small space for efficient querying.
This paper considers two-table cuckoo hashing.

\begin{definition}[Syntax of Two-table Cuckoo Hashing]
\normalfont
    A two-table cuckoo hashing scheme, $\Pi_{CuckHash}$, consists of two algorithms, $(Build, Lookup)$, defined as follows.

    \begin{itemize}[leftmargin=*]
        \item $Build(1^\lambda, X)$: A build algorithm that is given a security parameter $\lambda$ and an element array $X=\{(a_{1},v_{1}),\ldots,(a_{n},v_{n})\}$ as inputs, returns a construction $T$, which includes two hash tables and a stash that successfully allocates $X$, or $\bot$ if the allocation fails. 

        \item $Lookup(a,T)$: A query algorithm that is given a virtual address $a$ and a construction $T$, returns the required element $(a,v)$ if $(a, v)\in{X}$, or $\bot$ otherwise.
    \end{itemize}

    The \textbf{build correctness} guarantees that the failure probability of the build phase is negligible in $\lambda$. In other words, the number of elements written to the stash should not exceed its threshold. Formally, for any element array $X$, the following holds:
    \begin{equation}
        \Pr\left[
        {Build}(1^{\lambda},X)=\bot
        \right]\leq{negl(\lambda)}
    \end{equation}
    
    The \textbf{lookup correctness} requires that the client can always extract the latest value $(a,v)$ associated with the queried address $a$ during lookup. Formally, for all $\lambda$, $X$, and each $(a,v)\in{X}$, we have

    \begin{equation}
        \Pr\left[\begin{array}{c}
        T\leftarrow{Build}(1^{\lambda},X),T\neq{\bot}:\\
        {Lookup(a,T)=(a,v)}
        \end{array}\right]=1
    \end{equation}

\end{definition}

We present the following theorem regarding the build failure probability of cuckoo hashing, as proposed in \cite{cryptoeprint:2021/447}.

\begin{theorem}[Build failure probability of cuckoo hashing \cite{cryptoeprint:2021/447}]
\normalfont
    Given an integer $N$, the build failure probability for a two-table cuckoo hashing scheme with $n = \omega(\log{N})$ elements and a stash of size $s = \Theta(\log{N})$ is negligible in $N$.
\end{theorem}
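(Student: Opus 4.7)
The plan is to invoke the well-established random-graph analysis of cuckoo hashing. I would model the allocation as a bipartite \emph{cuckoo graph} $G$ on $2m$ vertices (one vertex per bin of each of the two tables, with $m = \Theta(n)$): each element becomes an edge between its two candidate bins, chosen by the two hash functions. A legal assignment with a stash of size at most $s$ exists if and only if the edges of $G$ can be oriented so that every vertex has in-degree at most one, after removing at most $s$ edges to the stash. Equivalently, letting $\mathrm{exc}(C) := \max\{|E(C)| - |V(C)|,\ 0\}$ for each connected component $C$ of $G$, the build succeeds iff $\sum_C \mathrm{exc}(C) \leq s$.

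First I would recall the classical Kirsch--Mitzenmacher--Wieder-style tail bound, which is the content of the analysis in \cite{cryptoeprint:2021/447}: assuming fully random hash functions (or a sufficiently independent hash family), there is a constant $c<1$, depending on the load factor $n/m$, such that $\Pr[\sum_C \mathrm{exc}(C) > s] = O(c^{s})$, and more strongly $O(n^{-s})$ when the tables are sized with some slack above $n$. The heart of the argument is a union bound over \emph{excess cores}: edge-minimal subgraphs whose presence certifies excess at least $s$. The number of such rooted subgraphs on $k$ edges is polynomial in $k$, while each appears with probability at most $(k/m)^{k}$, so summing over $k \geq s$ yields a geometric tail in $s$.

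Next I would plug in the hypotheses: take the table size $m = \Theta(n)$ with load factor below the cuckoo threshold, and $s = \Theta(\log N)$. Then the failure probability is bounded by $O(c^{\,\Theta(\log N)}) = N^{-\Omega(1)}$, and by choosing the constant inside $\Theta(\log N)$ large enough it becomes $N^{-\omega(1)}$, hence negligible in $N$. The hypothesis $n = \omega(\log N)$ enters twice: it ensures that $s = \Theta(\log N)$ is genuinely below $n$ so the stash analysis is non-trivial, and it lets one absorb small-component fluctuations into the constant of $\Theta(\log N)$ without degrading the bound.

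The main obstacle I foresee is formalising the enumeration of excess cores with the right exponent. One has to separate tree components (excess $0$), unicyclic components (excess $0$), and complex components (excess $\geq 1$), and show via a careful count of doubly-labelled rooted multigraphs that the expected number of complex substructures of ``excess $\geq s$'' decays like $(c/n)^{s}$ for a fixed $c<1$; the rest is a short calculation. If fully random hashing is not available in the intended application, an additional step invokes a pseudorandom or $O(\log N)$-wise independent hash family and argues that the same moment bounds go through up to a negligible loss, but this does not change the asymptotic conclusion.
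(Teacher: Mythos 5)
The paper does not prove this theorem; it is quoted from~\cite{cryptoeprint:2021/447}, so there is no in-paper argument to compare against. Your overall framework (model the two tables as a bipartite cuckoo graph with bins as vertices and elements as edges, characterise stash usage via the total excess $\sum_C \mathrm{exc}(C)$ over connected components, and bound the tail by enumerating excess cores) is the standard and correct route, and it is the one the cited analysis follows.

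However, your final step contains a genuine error. You write that the failure probability is bounded by $O\bigl(c^{\,\Theta(\log N)}\bigr)=N^{-\Omega(1)}$ and that ``by choosing the constant inside $\Theta(\log N)$ large enough it becomes $N^{-\omega(1)}$.'' This is false: for any fixed constant $c<1$ and any fixed constant $a>0$, $c^{\,a\log N}=N^{a\log c}$ is an \emph{inverse polynomial}, never super-polynomial, and tuning $a$ only changes which polynomial it is. No choice of the hidden constant in $\Theta(\log N)$ can turn a geometric tail $c^{\,s}$ into a negligible function when $s=O(\log N)$. To obtain negligibility you must use the stronger bound you mention earlier and then abandon, namely that the tail decays like $n^{-\Omega(s)}$ (equivalently $(C/n)^{s}$ for a fixed $C$) once the load factor has slack. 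Plugging in $n=\omega(\log N)$ and $s=\Theta(\log N)$ then gives
\begin{equation*}
n^{-\Omega(s)} \;\le\; \bigl(\omega(\log N)\bigr)^{-\Theta(\log N)} \;=\; \exp\bigl(-\Theta(\log N)\cdot\omega(1)\bigr)\;=\;N^{-\omega(1)},
\end{equation*}
which is negligible. This is also where the hypothesis $n=\omega(\log N)$ actually earns its keep: it makes the base of the decaying exponential grow without bound, which is what pushes the exponent past every polynomial. Your two stated explanations for why $n=\omega(\log N)$ is needed (that the stash is non-trivially smaller than $n$, and that it absorbs small-component fluctuations) are side remarks and do not account for the negligibility. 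As written, your proof establishes only $\mathrm{poly}(N)^{-1}$ failure probability, not $\mathrm{negl}(N)$; substitute the $n^{-\Omega(s)}$ bound in the last step to close the gap.
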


In the context of our ORAM constructions, $N$ represents the size of the entire database. This theorem indicates that for $\Omega(\log^{2}{N})$ elements, setting the cuckoo table size to $\Omega(\log^{2}{N})$ and the stash size to $\Theta(\log{N})$ ensures a negligible failure probability with respect to $\lambda = N$. This property is leveraged in our scheme, as the cuckoo table size for each level of our ORAM structure is $\Omega(\log^{2}{N})$.

\subsection{Distributed Point Function}
Distributed point function (DPF) was introduced by Gilboa et al. \cite{DBLP:conf/eurocrypt/GilboaI14} as a cryptographic primitive for secure searching in distributed settings. For a positive integer $c$, the point function $P_{i,x}:\{0,1\}^{|c|}\rightarrow{\{0,1\}^{|x|}}$ is defined such that $P_{i,x}(i) = x$ and $P_{i,x}(i') = 0^{|x|}$ for all $i' \neq i$. This paper primarily considers the single-bit point function \cite{DBLP:conf/ccs/BoyleGI16}, where the output domain of $P_{i,x}$ is $\{0,1\}$. A DPF provides function secret sharing for the point function.

\begin{definition}[Syntax of DPF]\label{SynDPF}
\normalfont
    A DPF protocol $\Pi_{DPF}$ consists of two algorithms $(Gen, Eval)$, with the following syntax.

    \begin{itemize}[leftmargin=*]
        \item $Gen(1^{\kappa},i,n,x)$: A PPT algorithm running on a client that is given a security parameter ${\kappa}$ and parameters $(i,n,x)$ describing a point function $P_{i,x}$ with the input domain $n$, outputs $k_{0}$ and $k_{1}$.

        \item $Eval(b,k_{b},i')$: A PPT algorithm running on two servers that on input the party indicator $b\in\{0,1\}$, a key $k_{b}$, and a point position $i'$, outputs a value belonging to the output domain. 
    \end{itemize}
        
    The \textbf{correctness} requires that for all $\kappa,i,n,x$, and each $i'\in\{0,\ldots,n-1\}$, we have

    \begin{equation}
        \Pr\left[\begin{array}{c}
        (k_{0},k_{1})\leftarrow{Gen}(1^{\kappa},i,n,x):\\
        \sum_{b=0}^{1}{Eval(b,k_{b},i')=P_{i,x}(i')}
        \end{array}\right] = 1
    \end{equation}
    
    The \textbf{security} ensures that in a non-colluding two-server setting, neither server can infer the description of $P_{i,x}$. In other words, given the security parameter $\kappa$, for all $n$, two values $x$ and $y$, two indexes $i\in\{0,\ldots,n-1\}$ and $j\in\{0,\ldots,n-1\}$, and a bit $b\in\{0,1\}$, the following distribution
    \begin{equation}
        \left\{(k_{0},k_{1})\leftarrow{Gen}(1^{\kappa},i,n,x): k_{b}\right\}
    \end{equation}
    is computationally indistinguishable with 
    \begin{equation}
        \left\{(k_{0},k_{1})\leftarrow{Gen}(1^{\kappa},j,n,y): k_{b}\right\}.
    \end{equation}

\end{definition}

Boyle et al. \cite{DBLP:conf/eurocrypt/BoyleGI15,DBLP:conf/ccs/BoyleGI16,DBLP:conf/eurocrypt/BoyleCG0I0R21} proposed several optimized DPF constructions that require only $O(B + \kappa \log{n})$ bits of communication and $O(\log{n})$ symmetric key computations for block size $B$. This implies that the bandwidth overhead is $O(1)$ for $B = O(\kappa \log{n})$.

\subsection{PIR in Two-server Settings}
We divide two-server PIR into read-only PIR and write-only PIR \cite{DBLP:conf/pkc/KushilevitzM19} according to different functionalities.

\noindent \textbf{Read-only PIR.} A read-only PIR ensures that a client can retrieve required values from an array without revealing to the servers which specific values were accessed.

\begin{definition}[Syntax of Read-only PIR]
\normalfont
    A two-server read-only PIR protocol $\Pi_{RPIR}$ contains two processes $(Setup, Read)$.

    \begin{itemize}[leftmargin=*]
        \item $Setup(1^\kappa,X)$: A setup algorithm that is given a security parameter $\kappa$ and an array $X$ with $n$ values of $B$ bits each from the client, outputs two arrays $X_{0},X_{1}$ stored on two servers, respectively.

        \item $Read(i,X_{0},X_{1})$: A read algorithm inputs an index $i$ from the client and arrays $X_{0},X_{1}$ from two servers. The client sends a pair of queries $q_{0},q_{1}$ to two servers and receives responses $r_{0},r_{1}$ from two servers. The client then gets the result by computing $r_{0}\oplus{r_{1}}$.
    \end{itemize}

    The \textbf{correctness} requires that for all $\kappa,X$ and $i\in\{0,\ldots,|X|-1\}$, we have
    \begin{equation}
        \Pr\left[\begin{array}{c}(X_{0},X_{1})\leftarrow{Setup(1^\kappa,X)};\\
        \big\{(q_{0},r_{0}),(q_{1},r_{1})\big\}\leftarrow{Read}(i,X_{0},X_{1}):r_{0}\oplus{r_{1}}=X[i]
        \end{array}\right] = 1.
    \end{equation}

    The \textbf{security} requires that given the security parameter $\kappa$, for all $X$, two indexes $i\in\{0,\ldots,|X|-1\},j\in\{0,\ldots,|X|-1\}$, and a bit $b\in\{0,1\}$, the following distribution
    \begin{equation}
        \left\{\begin{array}{c}(X_{0},X_{1})\leftarrow{Setup(1^\kappa,X)};\\
        \big\{(q_{0},r_{0}),(q_{1},r_{1})\big\}\leftarrow{Read}(i,X_{0},X_{1}):q_{b}
        \end{array}\right\}
    \end{equation}
    is computationally indistinguishable with 
    \begin{equation}
        \left\{\begin{array}{c}(X_{0},X_{1})\leftarrow{Setup(1^\kappa,X)};\\
        \big\{(q_{0},r_{0}),(q_{1},r_{1})\big\}\leftarrow{Read}(j,X_{0},X_{1}):q_{b}
        \end{array}\right\}.
    \end{equation}
\end{definition}

\begin{figure*}[t!]
    \centering
    \subfigure[Setup phase]
    {
        \begin{minipage}{.42\linewidth}
            \centering
            \caption{Setup phase}
            \includegraphics[width=\textwidth]{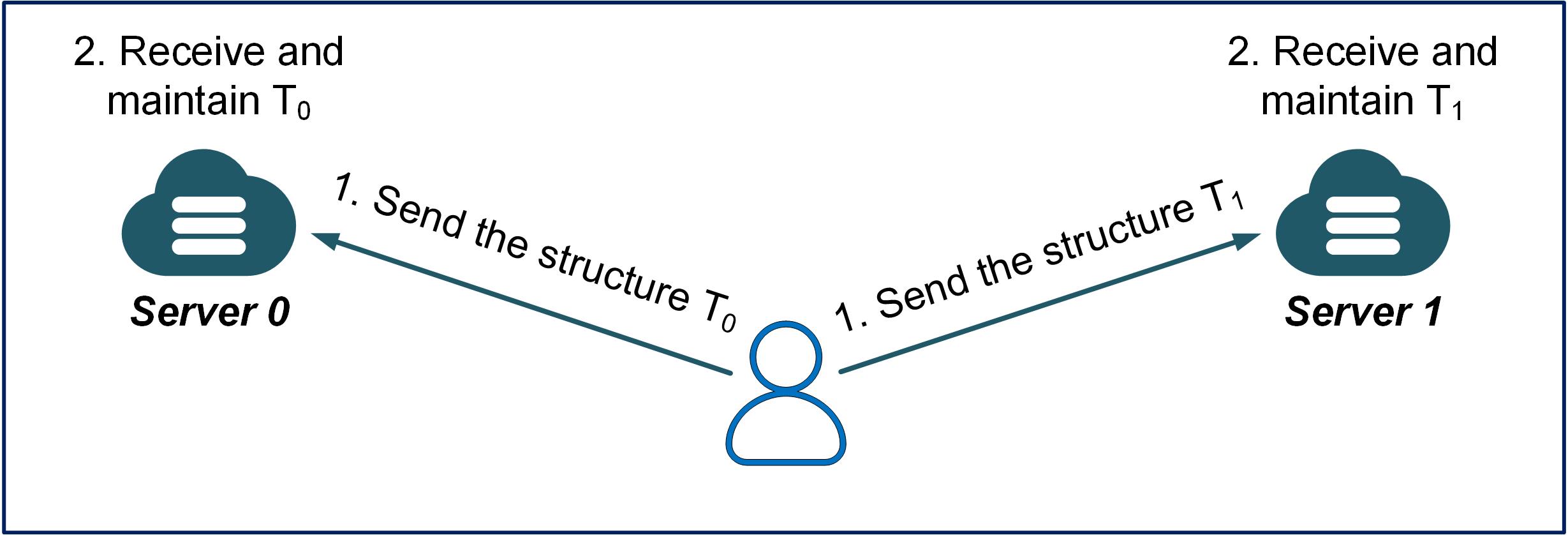}
            \label{fig:Setup}
        \end{minipage}    
    }
    \quad\quad
    \subfigure[Accesss phase]
    {
        \begin{minipage}{.42\linewidth} 
            \centering
            \includegraphics[width=\textwidth]{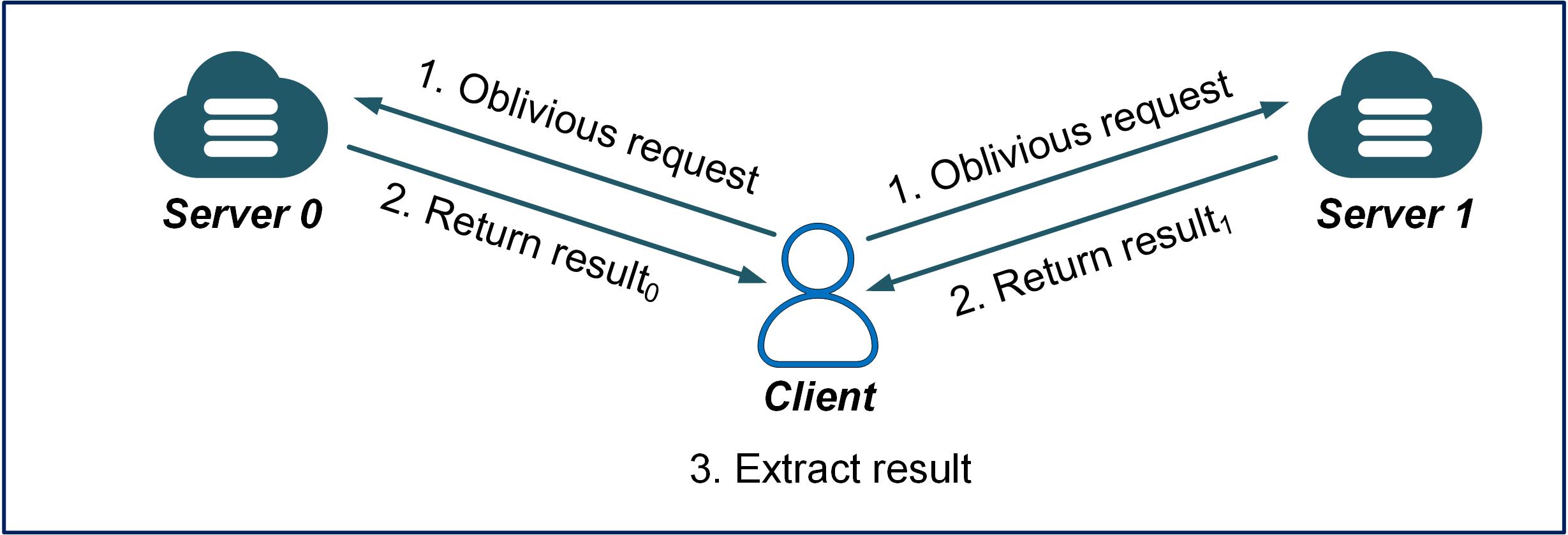}
            \label{fig:Access}
        \end{minipage}
    }
    \caption{
    Two-server ORAM system model. In the setup phase, the client sends the distributed structures to two servers. In the access phase, the client requests an element and retrieves results from two servers. The client may reconstruct partial structures during the access phase, e.g., evict a path for tree-based ORAM or rebuild a level for hierarchical ORAM.
    }
    \label{fig:SystemModel}
\end{figure*}

\noindent \textbf{Write-only PIR.} A write-only PIR ensures that a client can privately write a value to a specific position of an array without revealing either the value being written or the position to the servers.

\begin{definition}[Syntax of Write-only PIR]
\normalfont
A write-only PIR protocol $\Pi_{WPIR}$ contains three processes $(Setup, Write, Build)$.

    \begin{itemize}[leftmargin=*]
        \item $Setup(1^\kappa,X)$: A setup algorithm that is given a security parameter $\kappa$, an array $X$ of $n$ values with $B$ bits each from the client, outputs two arrays $X_{0},X_{1}$ stored on two servers, respectively.

        \item $Write(i,x_{old},x_{new},X_{0},X_{1})$: A write algorithm inputs an index $i$, an old value $x_{old}$ currently stored in the $i$th position of $X$, a new value $x_{new}$ from the client, and arrays $X_{0},X_{1}$ from two servers. The client sends a query $q_{b}$ and a value $v_{b}$ to the server $\mathcal{S}_{b}$ for $b\in\{0,1\}$. The two servers then update arrays $X_{0},X_{1}$.

        \item $Build(X_{0},X_{1})$: A build algorithm that inputs two updated arrays $X_{0},X_{1}$ on two servers, outputs a new array.
    \end{itemize}

    The \textbf{correctness} requires that the client can correctly reconstruct the updated array $X$ after executing the $Build(X_{0},X_{1})$.
    Formally, for all $\kappa,X$, an index sequence $iS$, and a value sequence $vS$, where $|iS|=|vS|$, we have
    \begin{equation}
        \Pr\left[\begin{array}{c}(X_{0},X_{1})\leftarrow{Setup(1^\kappa,X)};\\
        \Big\{\big\{(q_{0},v_{0}),(q_{1},v_{1})\big\}
        \leftarrow\\
        {Write}(iS[k],x_{old},vS[k],X_{0},X_{1})\Big\}_{k\in\{0,\ldots,|iS|-1\}}:\\
        Build(X_{0},X_{1})=X'
        \end{array}\right] = 1,
    \end{equation}
    where $X'$ is an element array after executing the write operation sequence on $X$, i.e., $X'=\big\{X:\big\{X\big[iS[k]\big]=vS[k]\big\}_{k\in\{0,...,|iS|-1\}}\big\}$.

    The \textbf{security} requires that given the security parameter $\kappa$, for all $X$, two indexes $i\in\{0,..,|X|-1\}$ and $j\in\{0,..,|X|-1\}$, two values $y$ and $z$, and a bit $b\in\{0,1\}$, the following distribution
    \begin{equation}
        \left\{\begin{array}{c}(X_{0},X_{1})\leftarrow{Setup(1^\kappa,X)};\\
        \big\{(q_{0},v_{0}),(q_{1},v_{1})\big\}\leftarrow{Write}(i,x_{old},y,X_{0},X_{1}):(q_{b},v_{b})
        \end{array}\right\}
    \end{equation}
    is computationally indistinguishable with 
    \begin{equation}
        \left\{\begin{array}{c}(X_{0},X_{1})\leftarrow{Setup(1^\kappa,X)};\\
        \big\{(q_{0},v_{0}),(q_{1},v_{1})\big\}\leftarrow{Write}(j,x_{old},z,X_{0},X_{1}):(q_{b},v_{b})
        \end{array}\right\}.
    \end{equation}
    
\end{definition}

Note that Kushilevitz and Mour \cite{DBLP:conf/pkc/KushilevitzM19} described secure read-only PIR and write-only PIR instances in two-server settings. We present details in Appendix \ref{AppPIR}.

\section{System and Security Models}\label{SG}
In this section, we present our system model, illustrated in Fig. \ref{fig:SystemModel}, which involves two protocols between the client and two servers. We formally define the system as follows:

\begin{definition}[Two-server ORAM]
\normalfont

    A two-server ORAM protocol $\Pi_{ORAM}$ comprises two algorithms $(Setup, Access)$.

    \begin{itemize}[leftmargin=*]
        \item $Setup(1^{\lambda}, X)$: A setup algorithm takes as input a security parameter $\lambda$ and an array $X$ containing $N$ elements, each with $B$ bits. It outputs an encrypted storage structure $EX_b$, which is stored on server $S_b$, for $b \in \{0, 1\}$.

        \item $Access(op,a,writeV)$: An access algorithm takes as input a triple $(op, a, writeV)$, where the operation $op\in\{read,write\}$, the virtual address $a\in\{0,...,N-1\}$, and the value $writeV\in\{0,1\}^{B}$. It outputs $X[a]$ to the client if $op\equiv{read}$; otherwise, it returns the value $X[a]$ and updates the data on virtual address $a$ to the new value $writeV$. Moreover, it may update the storage structure $EX_{b}$ to $EX'_{b}$ for two servers.
    \end{itemize}
    
\end{definition}

The \textbf{correctness} requires that for any legal $(op,a,writeV)$, the access protocol always returns the latest data consistent with $X[a]$.

For \textbf{security}, we assume that the two non-colluding servers are honest but curious. They cannot maliciously tamper with data or deviate from the protocol but try to infer some sensitive information based on their observations. Our security goal is to ensure obliviousness and confidentiality. It ensures that the adversary cannot infer the requested address $a$ from the access patterns and cannot know the contents of elements.

We respectively describe executions in a real world and an ideal world in $\mathbf{Real}_{\mathcal{A},\mathcal{C}}^{\Pi_{ORAM}}$ and $\mathbf{Ideal}_{\mathcal{A},\mathcal{C}}^{Sim}$. In the real world, the client interacts with the adversary through a real protocol $\Pi_{ORAM}$. In the ideal world, the client invokes a simulator $Sim$, who only knows the input array size, to interact with the adversary. In both worlds, the adversary can observe the output elements $Out_{b}$, the updated storage structure $EX'_{b}$, and the memory access patterns $AP_{b}$ during the access phase. The adversary’s goal is to determine whether the execution is in the real world or in the ideal world.

{
\renewcommand{\thealgorithm}{}
\floatname{algorithm}{}

\begin{algorithm}[!t]\label{Alg. RealGame}
\caption{$\mathbf{Real}_{\mathcal{A},\mathcal{C}}^{\Pi_{ORAM}}$}
\label{RealLab}
\begin{algorithmic}[1]
    \State The adversary $\mathcal{A}$ chooses a bit $b\in\{0,1\}$ and corrupts the server $\mathcal{S}_{b}$.
    \State $\mathcal{A}$ randomly generates an element array by $X_{b}\gets{\mathcal{A}(1^\lambda)}$ and sends it to the client.
    \State The client $\mathcal{C}$ receives $X_{b}$, constructs $EX_{b}\gets\Pi_{ORAM}.Setup(1^{\lambda},X_{b})$, and sends the $EX_{b}$ to $\mathcal{A}$.
    \State The adversary $\mathcal{A}$ generates $(op,a,writeV)\gets{\mathcal{A}(1^\lambda,EX_{b})}$ and sends it to the client.
    \State\textit{Loop while} {$op\neq{\perp}$:}
        \State \quad \quad $(Out_{b},EX'_{b},AP_{b})\gets\Pi_{ORAM}.Access(op,a,writeV)$.
        \State \quad \quad The adversary $\mathcal{A}$  generates $(op,a,writeV)\gets{\mathcal{A}(1^\lambda,Out_{b},EX'_{b},AP_{b})}$ and sends it to the client.
    \State The adversary $\mathcal{A}$ outputs a bit $q\in\{0,1\}$.
\end{algorithmic}
\end{algorithm}

\begin{algorithm}[!t]\label{Alg. IdealGame}
\caption{$\mathbf{Ideal}_{\mathcal{A},\mathcal{C}}^{Sim}$\label{IdealLab}}
\begin{algorithmic}[1]
    \State The adversary $\mathcal{A}$ chooses a bit $b\in\{0,1\}$ and corrupts the server $\mathcal{S}_{b}$.
    \State $\mathcal{A}$ randomly generates an element array by $X_{b}\gets{\mathcal{A}(1^\lambda)}$ and sends it to the client.
    \State The client $\mathcal{C}$ receives $X_{b}$, constructs $EX_{b}$ by executing simulator $EX_{b}\gets{Sim}(1^{\lambda},|X_{b}|)$, and sends the $EX_{b}$ to $\mathcal{A}$.
    \State The adversary $\mathcal{A}$ generates $(op,a,writeV)\gets{\mathcal{A}(1^\lambda,EX_{b})}$ and sends it to the client.
    \State\textit{Loop while} {$op\neq{\perp}$:}
        \State \quad \quad $(Out_{b},EX'_{b},AP_{b})\gets{Sim}(1^{\lambda},|X_{b}|)$.
        \State \quad \quad The adversary $\mathcal{A}$ generates $(op,a,writeV)\gets{\mathcal{A}(1^\lambda,Out_{b},EX'_{b},AP_{b})}$ and sends it to the client.
    \State The adversary $\mathcal{A}$ outputs a bit $q\in\{0,1\}$.
\end{algorithmic}
\end{algorithm}
}

\begin{definition}[Oblivious simulation security]
We say that a two-server ORAM protocol $\Pi_{ORAM}$ satisfies the oblivious simulation security in the presence of an adversary $\mathcal{A}$ corrupting only one server $\mathcal{S}_{b}$ where $b\in\{0,1\}$ iff for any PPT real-world adversary $\mathcal{A}$, there exists a simulator $Sim$, such that 
\begin{equation}
    \big|Pr[\mathbf{Real}_{\mathcal{A},\mathcal{C}}^{\Pi_{ORAM}}=1]-Pr[\mathbf{Ideal}_{\mathcal{A},\mathcal{C}}^{Sim}=1]\big|\leq{negl(\lambda)}
\end{equation}
\end{definition}

\section{The Proposed Scheme Cforam}\label{SecScheme}

In this section, we introduce the storage structure, detailed construction\footnote{Note that in our protocol, we use \textit{encrypt} and \textit{decrypt} to represent symmetric encryption operations, such as AES, for encrypting and decrypting elements.}, and overhead of Cforam. The security of Cforam is ensured by the pseudorandomness of PRF, the security of PIR, and the negligible build failure probability of cuckoo hashing. A thorough analysis is provided in Appendix \ref{AppCforam}.

\subsection{ORAM Storage Structures}

\noindent \textbf{Server $\mathcal{S}_{b}$'s storage.} For $N$ elements, each server $\mathcal{S}_{b}$'s storage $EX_{b}$ contains a pairwise area separately constructed by replicated elements and secret-shared tags in $L-\ell+1$ levels, numbered $\{\ell,\ell+1,\ldots,L\}$, where $\ell=\lceil{2\log\log{N}}\rceil$ and $L=\lceil{\log{N}}\rceil$. The server's storage structure is illustrated in Fig. \ref{OurSto}. 

The element area includes:

\begin{figure}[!t]
    \centering
    \includegraphics[width=\linewidth]{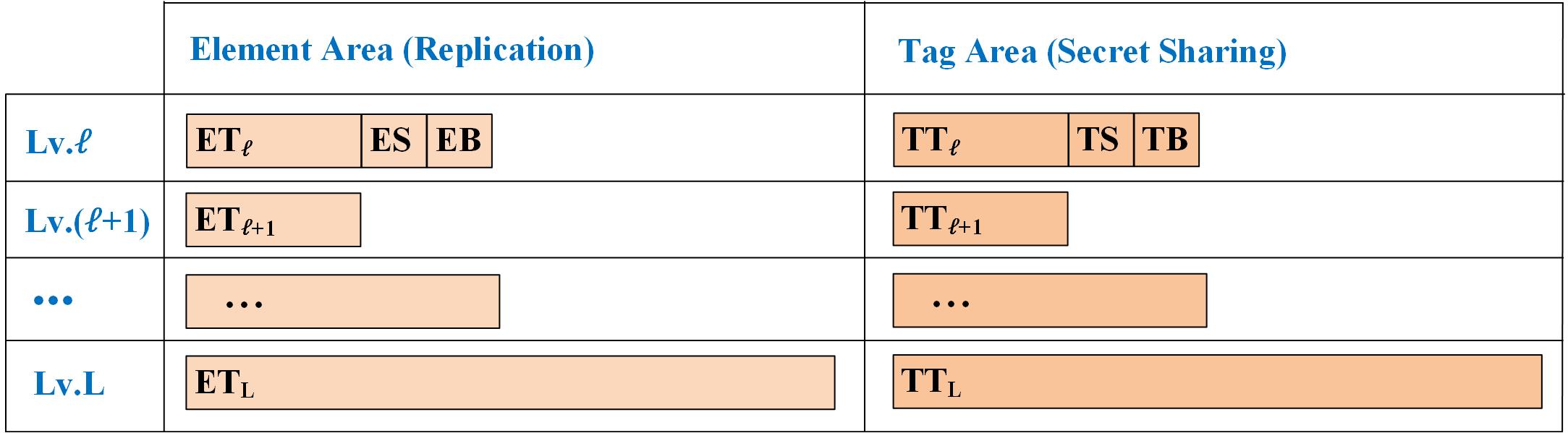}
    \caption{The server storage of our construction. Each server stores a replicated element area and a secret-shared tag area spanning $L-\ell+1$ levels. For each level $i\in\{\ell,\ldots,L\}$, the server storage includes element tables $ET_{i}=(ET_{i0},ET_{i1})$ and tag tables $TT_{i}=(TT_{i0},TT_{i1})$. Additionally, the $\ell$th level contains an element stash and a buffer, a tag stash and a buffer.}
    \label{OurSto}
\end{figure}

\begin{itemize}[leftmargin=*]
    \item Lv. $\ell$: Two tables $\langle{ET}_{\ell0}\rangle_{b}$, $\langle{ET}_{\ell1}\rangle_{b}$, and a stash $\langle{ES}\rangle_{b}$ using cuckoo hashing, with the capacity $n=2^{\ell+1}+\log{N}\cdot{(L-\ell)}$; A buffer $\langle{EB}\rangle_{b}$ with $\log{N}$ capacity. Each position of these structures stores an element $(a, v)$.
    
    \item Lv. $i\in\{\ell+1,...,L\}$: Two tables $\langle{ET}_{i0}\rangle_{b}$ and $\langle{ET}_{i1}\rangle_{b}$ for cuckoo hashing, which can accommodate $n=2^{i}$ elements.
\end{itemize}

The tag area includes:
\begin{itemize}[leftmargin=*]
    \item Lv. $\ell$: Two tables $\llbracket{TT}_{\ell0}\rrbracket_{b}$, $\llbracket{TT}_{\ell1}\rrbracket_{b}$, and a stash $\llbracket{TS}\rrbracket_{b}$ for cuckoo hashing, with the capacity $n=2^{\ell+1}+\log{N}\cdot{(L-\ell)}$; A buffer $\llbracket{TB}\rrbracket_{b}$ with $\log{N}$ capacity. Each position records a secret-shared tag $\tau$ given a virtual address $a$.

    \item Lv. $i\in\{\ell+1,...,L\}$: Two tables $\llbracket{TT}_{i0}\rrbracket_{b}$, and $\llbracket{TT}_{i1}\rrbracket_{b}$ for cuckoo hashing with the capacity $n=2^{i}$, where each position stores a secret-shared tag $\tau$.
\end{itemize}

\begin{remark}
\normalfont
    In our structure, the first positions (i.e., position $0$) of tables, stashes, and buffers do not store real elements and are applied in \textbf{PIR pseudo-write} operations. A PIR pseudo-write indicates that, given an array-like storage structure, the client invokes write-only PIR to obliviously modify the content at position $0$.
\end{remark}

\begin{remark}
\normalfont
    We classify the stored elements into three types: \textbf{real elements, dummy elements, and empty elements}. During the setup phase, the elements written into the storage structure are real elements. During the access phase, if the client locates the required element at a particular position, (s)he will mark the element at that location as a dummy by overwriting its tag. Empty elements in the hash table indicate that the corresponding position is unoccupied. When rebuilding the $i$th level ($i \neq L$), the client keeps real and dummy elements, while removing only empty elements; when rebuilding the $L$th level, the client preserves real elements in this level, while removing both dummy and empty elements.
\end{remark}

\noindent \textbf{Client storage.} The client preserves:

\begin{itemize}[leftmargin=*]
    \item One bit $full_{i}$ for level $i\in\{\ell,...,L\}$, where $1$ indicates the level contains non-empty elements and $0$ denotes that the level is empty.

    \item Two indicators $lenS$ and $lenB$, which denote the number of elements in the stash and the buffer, respectively.

    \item A global counter $ctr$ initialized to $0$. After each access, the counter is incremented by $1$. Using this counter, the client can locally calculate the $epoch_{i}$ for each level $i\in\{\ell,...,L\}$, which represents the number of times the $i$th level has been rebuilt.

    \item A level master key $lk$ for generating hash function keys of each level and a tag key $tk$ for generating tags.
\end{itemize}

\subsection{Full Construction}
We now formally describe Cforam, which consists of three functions: Setup, Access, and Rebuild. 

The overall architecture is depicted in Fig. \ref{OurArx}. Cforam includes both the setup protocol $\Pi_{Setup}$ and the access protocol $\Pi_{Access}$. After a fixed number of accesses, $\Pi_{Access}$ triggers the rebuild protocol $\Pi_{Rebuild}$ to reconstruct a level. The setup and rebuild protocols require the cuckoo hashing $\Pi_{CuckHash}$ to construct tables. During access, the protocol $\Pi_{Access}$ will leverage the DPF-based read-only PIR $\Pi_{RPIR}$ and write-only PIR $\Pi_{WPIR}$ to handle client requests.

\begin{figure}[!t]
    \centering
    \includegraphics[height=0.6\linewidth]{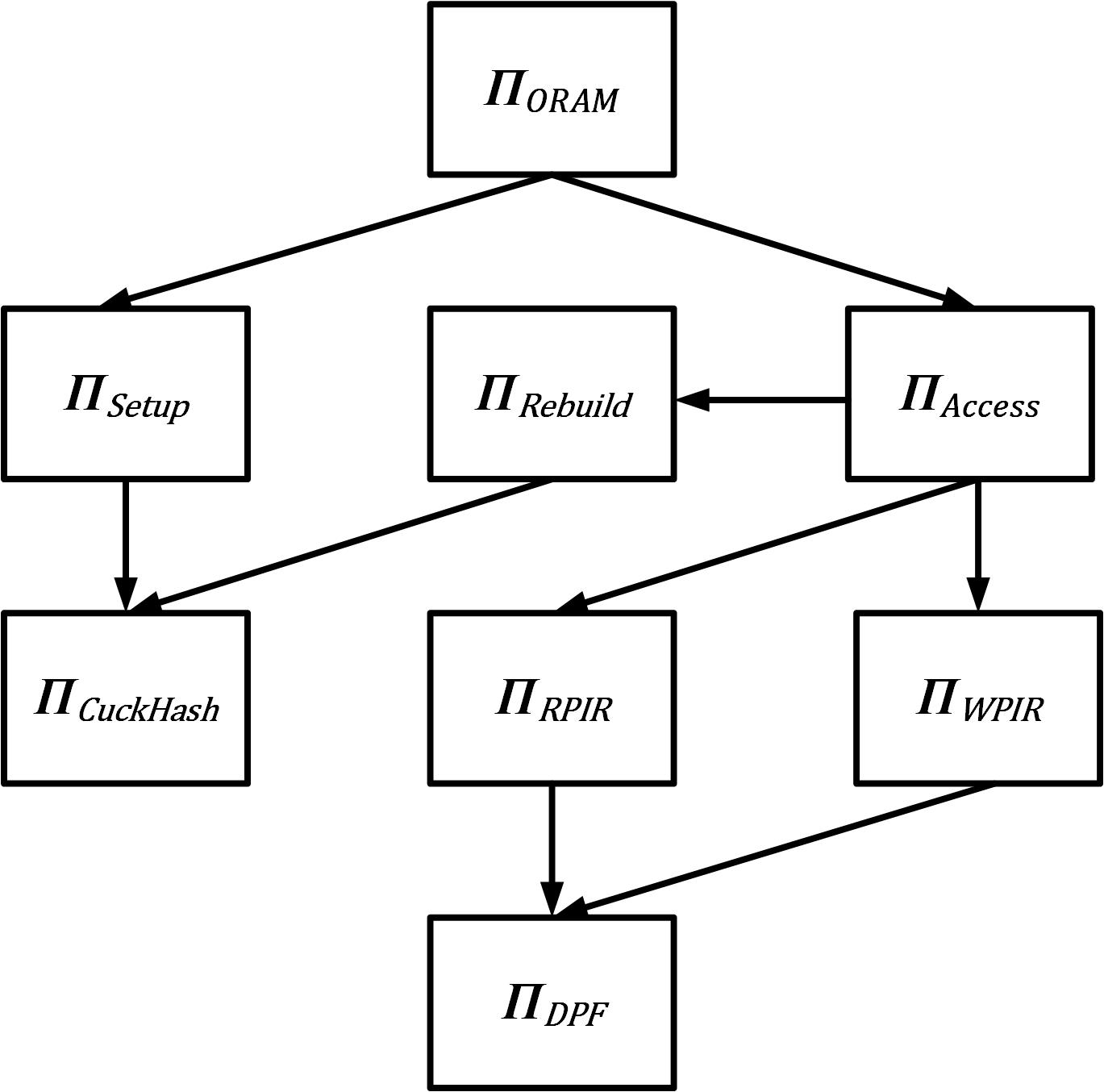}
    \caption{The overall architecture of our ORAM constructions.}
    \label{OurArx}
\end{figure} 

\smallskip

\begin{algorithm}[!t]
    \renewcommand{\thealgorithm}{1}
    \floatname{algorithm}{Protocol}
    \caption{$\Pi_{Setup}(1^\lambda,X)$}\label{AlgoInitial}
    \begin{algorithmic}[1]
        
        \Statex \underline{\textit{Client:}}
        \smallskip

        \State Initialize $full_{i}\coloneqq{0}$ for each $i\in\{\ell,...,L\}$, $lenS\coloneqq{1}$, $lenB\coloneqq{1}$, $ctr\coloneqq{0}$, a secret level master key $lk$, and a tag key $tk$.
        \State Compute the $\ell$th level hash function key $hk_{\ell}\leftarrow{F}(lk,\ell,epoch_{\ell})$ and $L$th level hash function key $hk_{L}\leftarrow{F}(lk,L,epoch_{L})$.

        \For{each $(a,v)\in{X}$}
            Invoke \textbf{Insert}($a,v,lev=L$) to place the encrypted element to the ORAM structures of the servers.
        \EndFor
        \smallskip
        
        \Statex \underline{\textit{Server $\mathcal{S}_{0}$:}}
        \smallskip
        \State {\color{red}Send to client} the message $m=(0,1)$ if the $\ell$th table has no real elements; otherwise,  {\color{red}send to client} $m=(1,|\langle{ES}\rangle_0|)$.
        \smallskip
        
        \Statex \underline{\textit{Client:}}
        \smallskip
        \State \textit{Receive} from server $\mathcal{S}_{0}$ the message $m$.
        \State Set $full_{L}\coloneqq{1}$, $full_{\ell}\coloneqq{m}[0]$, $lenS\coloneqq{m}[1]$.
    \end{algorithmic}
\end{algorithm}

\begin{algorithm}[!t]
    \renewcommand{\thealgorithm}{}
    \floatname{algorithm}{Algorithm}
    \caption{\textbf{Insert}($a,v,lev$)}
    \begin{algorithmic}[1]
        \Statex \underline{\textit{Client:}}
        \smallskip
        \State Compute the tag $\tau\leftarrow{F}(tk,a)$.
        \State Compute positions $(pos_{lev0},pos_{lev1})\leftarrow{H}(hk_{lev},\tau)$ for $lev$th table and $(pos_{\ell{0}},pos_{\ell{1}})\leftarrow{H}(hk_{\ell},\tau)$ for $\ell$th table.
        \State Generate $\tau_{0}$ and $\tau_{1}$ satisfying that $\tau_{0}\oplus\tau_{1}=\tau$.
        \State \textit{Encrypt} the element $(a,v)$, {\color{red}send to servers $\mathcal{S}_{b}$ for $b\in\{0,1\}$} the encrypted element $(a,v)$, its secret-shared tag $\tau_{b}$, and positions $(pos_{lev0},pos_{lev1},pos_{\ell{0}},pos_{\ell{1}})$.

        \Statex \underline{\textit{Servers $\mathcal{S}_{b}$:}}
        \smallskip
        \State \textit{Receive} and try to place the encrypted $(a,v)$ into the cuckoo hashing table of element area and $\tau_{b}$ into the table of tag area at the $lev$th level.
        \If{$lev$th level overflows}
            \State Place the encrypted $(a,v)$ into the table of element area and $\tau_{b}$ into the table of tag area at the $\ell$th level.
        \EndIf
    \end{algorithmic}
\end{algorithm}

\begin{figure*}[!t]
    \centering
    \includegraphics[width=0.85\linewidth]{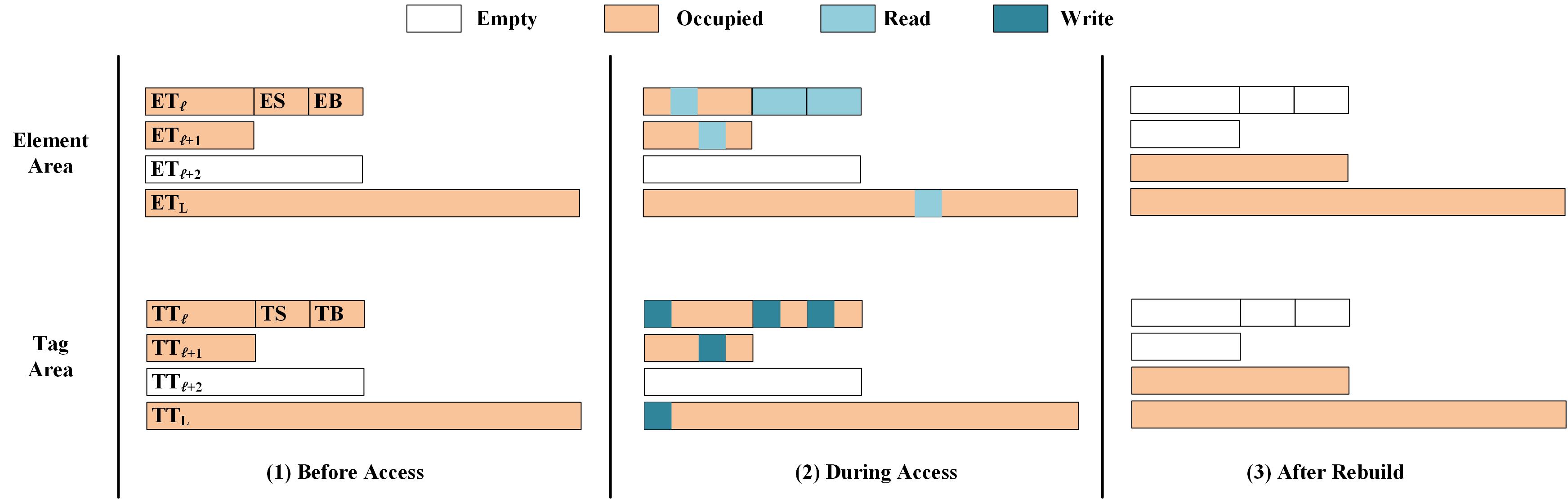}
    \caption{\normalsize{An example of Cforam with four levels for different phases. The diagram uses following colors to represent the components: (i) \textbf{White}: Empty levels. (ii) \textbf{Orange}: Non-empty levels that are occupied. (iii) \textbf{Light Blue}: Positions for element reading. (iv) \textbf{Dark Blue}: Positions for tag writing.}}
    \label{OurExample}
\end{figure*}

\begin{algorithm*}[!t]
    \renewcommand{\thealgorithm}{2}
    \floatname{algorithm}{Protocol}
    \caption{$\Pi_{Access}(op,a,writeV)$}\label{AlgoAccess}

    \begin{multicols}{2}
    \begin{algorithmic}[1]
        \Statex \underline{\textit{Server $\mathcal{S}_{0}$:}}
        \smallskip
        \For{$ind=|\langle{EB}\rangle_{0}|-1,...,1$}
            {\color{red}Send to client} 
            $\langle{EB}\rangle_{0}[ind]$. \label{AlAccEBBegin}\label{AlgoAcLine1}
        \EndFor
        \For{$ind=|\langle{ES}\rangle_{0}|-1,...,1$}\label{AlAccESBegin}
            {\color{red}Send to client} $\langle{ES}\rangle_{0}[ind]$.\label{AlgoAcLine2}
        \EndFor
        
        \smallskip
        
        \Statex \underline{\textit{Client:}}
        \smallskip
        \State Set $found\coloneqq{false}$, $posEB\coloneqq{0}$, $posES\coloneqq{0}$, $recE\coloneqq{(\bot,\bot)}$.
        \State Compute a tag $\tau\leftarrow{F}(tk,a)$ and random non-zero tag $\tau_{rand}$.
        \For{$ind=lenB-1,...,1$}
            \State \textit{Receive} and \textit{decrypt} each element $(rA,rV)$ from the buffer $\langle{EB}\rangle_{0}$ of server $\mathcal{S}_{0}$.
            \If{$rA\equiv{a}$ $\wedge$ ${\neg{found}}$}
                $recE\coloneqq{(rA,rV)}$, $found\coloneqq{true}$, $posEB\coloneqq{ind}$.
            \EndIf
        \EndFor 
        \State \textbf{end for}
        \For{$ind=lenS-1,...,1$}
            \State \textit{Receive} and \textit{decrypt} each element $(rA,rV)$ from the stash $\langle{ES}\rangle_{0}$ of server $\mathcal{S}_{0}$.
            \If{$rA\equiv{a}$ $\wedge$ ${\neg{found}}$}
                $recE\coloneqq{(rA,rV)}$, $found\coloneqq{true}$, $posES\coloneqq{ind}$.
            \EndIf
        \EndFor
        \State \textbf{end for}\label{AlgoAcLine12}
        \State Invoke $\mathbf{\Pi_{WPIR}.Write}(posEB,\tau,\tau\oplus\tau_{rand},\llbracket{TB}\rrbracket_{0},\llbracket{TB}\rrbracket_{1})$ to privately modify the tag buffer on two servers.\label{AlgoAcLine13}
        \State Invoke $\mathbf{\Pi_{WPIR}.Write}(posES,\tau,\tau\oplus\tau_{rand},\llbracket{TS}\rrbracket_{0},\llbracket{TS}\rrbracket_{1})$ to privately modify the tag stash on two servers. \label{AlgoAcLine14}
        
        \For{level $i=\ell,...,L$}\label{AlAccETBegin}\label{AlgoAcLine15}
            \If{$full_{i}\equiv{0}$} continue.
            \EndIf
            \State Compute the $i$th level hash key $hk_{i}\gets{F(lk,i,epoch_{i})}$.
            \State Set $(posR_{i0},posR_{i1})\gets{H(hk_{i},\tau)}$ for element reading and $(posW_{i0}\coloneqq{0}$, $posW_{i1}\coloneqq{0})$ for tag writing.
            \State Invoke $\mathbf{\Pi_{RPIR}.Read}(posR_{i0},\langle{ET}_{i0}\rangle_{0},\langle{ET_{i0}}\rangle_{1})$ to get $r_{i0}$ and $\mathbf{\Pi_{RPIR}.Read}(posR_{i1},\langle{ET}_{i1}\rangle_{0},\langle{ET_{i1}}\rangle_{1})$ to get $r_{i1}$.\label{AlgoAcLine19}  
            \State \textit{Decrypt} the elements $r_{i0}$ and $r_{i1}$.
            
            \If{$r_{i0}[0]\equiv{a}$ $\wedge$ ${\neg{found}}$}
                \State $recE\coloneqq{r_{i0}}$, $found\coloneqq{true}$, $posW_{i0}\coloneqq{posR_{i0}}$.. 
            \EndIf
            \If{$r_{i1}[0]\equiv{a}$ $\wedge$ ${\neg{found}}$}
                \State $recE\coloneqq{r_{i1}}$, $found\coloneqq{true}$, $posW_{i1}\coloneqq{posR_{i1}}$. 
            \EndIf\label{AlgoAcLine23}

        
            \State Invoke  $\mathbf{\Pi_{WPIR}.Write}(posW_{i0},\tau,\tau\oplus\tau_{rand},\llbracket{TT}_{i0}\rrbracket_{0},\llbracket{TT}_{i0}\rrbracket_{1})$ and $\mathbf{\Pi_{WPIR}.Write}(posW_{i1},\tau,\tau\oplus\tau_{rand},\llbracket{TT}_{i1}\rrbracket_{0},\llbracket{TT}_{i1}\rrbracket_{1})$ to privately modify tags in two hash tables of the $i$th level.\label{AlgoAcLine24}
        \EndFor
        \State \textbf{end for}

        \State Generate $\tau_{0},\tau_{1}$ that $\tau_{0}\oplus\tau_{1}=\tau$ and let $lenB\coloneqq{lenB+1}$.
        \State {Set $sendE\coloneqq{(a,writeV)}$ if $op\equiv{write}$ or $sendE\coloneqq(recE[0],recE[1])$ otherwise.}
        \State \textit{Encrypt} the element $SendE$, {\color{red}send to servers $\mathcal{S}_{b}$} the encrypted element $sendE$ and tag $\tau_{b}$.\label{AlgoAcLine28}
        \smallskip

        \Statex \underline{\textit{Servers $\mathcal{S}_{b}$:}}
        \smallskip
        \State {\textit{Receive} the encrypted element $SendE$ and tag $\tau_{b}$, then add $SendE$ to $\langle{EB}\rangle_{b}$ and $\tau_{b}$ to $\llbracket{TB}\rrbracket_{b}$.}\label{AlAccETEnd}
        \smallskip
        
        \Statex \underline{\textit{Client:}}
        \smallskip
        \State {Increment the global access counter $ctr$.}\label{AlAccRebBegin}
        \If{{$ctr\equiv{0\ mod\ {2^{L}}}$}} {Invoke $\mathbf{\Pi_{RebuildL}}$.}
        \EndIf
        \State {\textbf{elif} $ctr\equiv{0\ mod\ {2^{\ell+1}}}$, \textbf{then} find the smallest level $j\in\{\ell+1,...,L-1\}$ where $full_{j}\equiv{0}$ and invoke $\mathbf{\Pi_{Rebuild}}(j)$.}
        \State {\textbf{elif} $ctr\equiv{0\ mod\ {\log{N}}}$, \textbf{then} invoke $\mathbf{\Pi_{Rebuild}}(\ell)$.}\label{AlAccRebEnd}
    \end{algorithmic}
    
    \end{multicols}
\end{algorithm*}

\begin{algorithm}[!t]
    \renewcommand{\thealgorithm}{3}
    \floatname{algorithm}{Protocol}
    
    \caption{$\Pi_{Rebuild}(j)${\color{blue}\Comment{If $j\equiv{\ell}$, execute the algorithm except the dashed line; otherwise, execute the whole process.}}}\label{AlgoRELL}

    \begin{algorithmic}[1]
        \Statex \underline{\textit{Server $\mathcal{S}_{0}$:}}
        \smallskip
        \For{each non-empty encrypted element $ele=(a,v)$ and tag $\tau_{0}$ from the buffer, stash, and $\ell$th hash table}\label{AlRebL1}\label{AlRebJ1}
            \State {\color{red}Send to client} $(ele,\tau_{0})$.
        \EndFor
        
        \For{\dashuline{each non-empty encrypted element $ele=(a,v)$ and tag $\tau_{0}$ from hash tables of levels $\{\ell+1,...,j-1\}$}}
            \State \dashuline{{\color{red}Send to client} $(ele,\tau_{0})$.}
        \EndFor

        \smallskip
        
        \Statex \underline{\textit{Server $\mathcal{S}_{1}$:}}
        \smallskip
        \For{each non-empty $\tau_{1}$ from the buffer, stash, and $\ell$th hash table}
            \State {\color{red}Send to client} $\tau_{1}$.\label{AlRebL6}
        \EndFor
        
        \For{\dashuline{each non-empty $\tau_{1}$ from hash tables of levels $\{\ell+1,...,j-1\}$}}
            \State \dashuline{{\color{red}Send to client} $\tau_{1}$.}\label{AlRebL8}\label{AlRebJ8}
        \EndFor
        
        \smallskip
        
        \Statex \underline{\textit{Client:}}
        \smallskip
        \State \textit{Receive} and \textit{decrypt} each $ele=(a,v)$ and $\tau_{0}$ from the server $\mathcal{S}_{0}$ and $\tau_{1}$ from the server $\mathcal{S}_{1}$.\label{AlRebL9}
        \State \textbf{If} $ele\neq{dummyE}$ $\wedge$ $\tau_{0}\oplus\tau_{1}\neq{F(tk,a)}$, set $ele\coloneqq{dummyE}$.{\color{blue}\Comment{Note that $dummyE$ denotes the dummy element.}}\label{AlRebL10}
        \For{each $ele=(a,v)$}
            Invoke \textbf{Insert}($a,v,j$) to place the element at the $j$th level of the ORAM structures.\label{AlRebL13}
        \EndFor

        \State Set $full_{i}\coloneqq{0}$ for $i\in\{\ell,...,j-1\}$, $full_{j}\coloneqq{1}$,  $lenS\coloneqq{1}$, $lenB\coloneqq{1}$, and change $full_{\ell}$ and $lenS$ if the $\ell$th level is non-empty.\label{AlRebL14}

    \end{algorithmic}
\end{algorithm}

\begin{algorithm}[!t]
    \renewcommand{\thealgorithm}{4}
    \floatname{algorithm}{Protocol}
    \caption{$\Pi_{RebuildL}$}\label{AlgoRL}
    \begin{algorithmic}[1]
        
        \Statex \underline{\textit{Server $\mathcal{S}_{0}$:}}
        \smallskip
        \State {\color{red}Send to client} each non-empty encrypted element $ele=(a,v)$ and tag $\tau_{0}$.\label{AlgoRLLine1}
        \smallskip
        
        \Statex \underline{\textit{Server $\mathcal{S}_{1}$:}}
        \smallskip
        \State {\color{red}Send to client} each non-empty tag $\tau_{1}$.\label{AlgoRLLine2}
        
        \Statex \underline{\textit{Client:}}
        \smallskip
        \State \textit{Receive} and \textit{decrypt} each element $ele=(a,v)$, tag $\tau_{0}$ and $\tau_{1}$.\label{AlgoRLLine3}
        \State \textbf{If} $ele\neq{dummyE}$ $\wedge$ $\tau_{0}\oplus\tau_{1}\neq{F(tk,a)}$, \textbf{then} \textit{encrypt} a $dummyE$ and {\color{red}send it to server $\mathcal{S}_{0}$}; \textbf{otherwise}, \textit{encrypt} the $ele$ and {\color{red}send it to $\mathcal{S}_{0}$}.\label{AlgoRLLine4}
        \smallskip

        \Statex \underline{\textit{Server $\mathcal{S}_{0}$:}}
        \smallskip
        \State \textit{Receive} all encrypted elements from the client and \textit{shuffle} them.\label{AlgoRLLine5}
        \State {\color{red}Send to client} each encrypted element.\label{AlgoRLLine6}
        \smallskip

        \Statex \underline{\textit{Client:}}
        \smallskip
        \State \textit{Receive} and \textit{decrypt} each element $ele=(a,v)$ from $\mathcal{S}_{0}$.\label{AlgoRLLine7}
        \State \textbf{If} $ele\neq{dummyE}$, \textbf{then} \textit{encrypt} the $ele$ and {\color{red}send it to $\mathcal{S}_{1}$}.\label{AlgoRLLine8}
        \smallskip

        \Statex \underline{\textit{Server $\mathcal{S}_{1}$:}}
        \smallskip
        \State \textit{Receive} all encrypted elements from the client and \textit{shuffle} them.\label{AlgoRLLine9}
        \State {\color{red}Send to client} each encrypted element.\label{AlgoRLLine10}
        \smallskip

        \Statex \underline{\textit{Client:}}
        \smallskip
        \State Generate a new level master key and new tag key.\label{AlgoRLLine11}
        \State \textit{Receive} and \textit{decrypt} each element $ele=(a,v)$ from $\mathcal{S}_{1}$ and then invoke \textbf{Insert}$(a,v,L)$ to reconstruct the ORAM.\label{AlgoRLLine12}
        \State Set $full_{i}\coloneqq{0}$ for $i\in\{\ell,...,L-1\}$, $full_{L}\coloneqq{1}$, $lenS\coloneqq{1}$, $lenB\coloneqq$, $ctr\coloneqq{0}$, and change $full_{\ell}$ and $lenS$ correspondingly if the $\ell$th level is non-empty.\label{AlgoRLLine13}
    \end{algorithmic}
\end{algorithm}

\noindent \textbf{Oblivious setup.} In the setup phase, the client utilizes standard cuckoo hashing to store all encrypted elements and secret-shared tags on servers by invoking the algorithm \textbf{Insert}($a,v,L$). During insertion, the servers try to place elements and tags at the $L$th level. If the $L$th level overflows, the elements and tags will be written to the $\ell$th level. Please see the details in Protocol \ref{AlgoInitial}.

\begin{algorithm*}[!t]
    \renewcommand{\thealgorithm}{5}
    \floatname{algorithm}{Protocol}
    \caption{$\Pi_{OpEleRead}${\color{blue}\Comment{Optimized PIR-read for element area.}}}   \label{AlgOpRead}
    
    \begin{multicols}{2}
    \begin{algorithmic}[1]
        \Statex \underline{\textit{Client:}}
        \State \textbf{If} $full_{\ell}\neq{0}$ \textbf{then} access the $\ell$th level as in Protocol \ref{AlgoAccess}. \label{AlgOpReadLine1}
        \State Set $fTab\coloneqq{0},fLen\coloneqq{0},fPos\coloneqq{0}$.{\color{blue}\Comment{$fTab$ indicates whether the required elements are located in the first table ($fTab=0$) or the second table ($fTab=1$) of the cuckoo tables, $fLen$ records the table length, and $fPos$ denotes the position.}}\label{AlgOpReadLine2}
        \State Generate the following two random DPF keys 
        \begin{equation*}
            \begin{aligned}(k_{00},k_{01})\leftarrow{DPF}.Gen(1^{\kappa},rPos_{0},Len_{L},1)\\(k_{10},k_{11})\leftarrow{DPF}.Gen(1^{\kappa},rPos_{1},Len_{L},1)
            \end{aligned}
        \end{equation*}      
        for two random positions $rPos_{0},rPos_{1}$, and then {\color{red}send to two servers} the DPF keys. {\color{blue}\Comment{$Len_{L}$ is the $L$th level's table length.}}\label{AlgOpReadLine3}
        \smallskip
        \Statex \underline{\textit{Servers $\mathcal{S}_{b}$:}}
        \State \textit{Receive} the two DPF keys $k_{0b},k_{1b}$ and then invoke the evaluation function $DPF.Eval(b,k_{0b},j)$ and $DPF.Eval(b,k_{1b},j)$ for each $j\in\{0,...,Len_{L}-1\}$ to generate two vectors $\llbracket{V}_{0}\rrbracket_{b},\llbracket{V}_{1}\rrbracket_{b}$.{\color{blue}\Comment{We omit the subscripts $rPos_{0},rPos_{1}$ for these two vectors.}}\label{AlgOpReadLine4}
        \smallskip
        \Statex \underline{\textit{Client:}}
        \For{level $i=\ell+1,...,L$}\label{AlgOpReadLine5}
            \If{$full_{i}\equiv{0}$} continue.
            \EndIf
            \State \textbf{If} {$found$} \textbf{then}
                \State \quad\quad Generate random positions $posR_{i0},posR_{i1}$ for $i$th level.
            \State \textbf{else then}
                \State \quad\quad Compute the $i$th hash key $hk_{i}\gets{F(lk,i,epoch_{i})}$.
                \State \quad\quad Set $(posR_{i0},posR_{i1})\gets{H(hk_{i},\tau)}$.\label{AlgOpReadLine11}
                
            \State $offset_{0}\coloneqq(rPos_{0}+Len_{i}-posR_{i0})\ mod\ Len_{i}$.
            \State $offset_{1}\coloneqq(rPos_{1}+Len_{i}-posR_{i1})\ mod\ Len_{i}$.\label{AlgOpReadLine12}
            \State {\color{red}Send to two servers} the $offset_{0}$ and $offset_{1}$.\label{AlgOpReadLine13}
            \smallskip
            \Statex \underline{\textit{Servers $\mathcal{S}_{b}$:}}
            \State \textit{Receive} $offset_{0}$ and $offset_{1}$.\label{AlgOpReadLine14}
            \State $\llbracket{V}_{i0}\rrbracket_{b}[k]\coloneqq\oplus_{j=0}^{\frac{Len_{L}}{Len_{i}}-1}{\llbracket{V}_{0}\rrbracket_{b}[j\cdot{Len_{i}+k}]}$ for $k\in\{0,...,Len_{i}-1\}$.\label{AlgOpReadLine15}
            \State $\llbracket{V}_{i1}\rrbracket_{b}[k]\coloneqq\oplus_{j=0}^{\frac{Len_{L}}{Len_{i}}-1}{\llbracket{V}_{1}\rrbracket_{b}[j\cdot{Len_{i}+k}]}$ for $k\in\{0,...,Len_{i}-1\}$.\label{AlgOpReadLine16}
            \State Cyclic left shift $offset_{0}$ for $\llbracket{V}_{i0}\rrbracket_{b}$ and $offset_{1}$ for $\llbracket{V}_{i1}\rrbracket_{b}$.\label{AlgOpReadLine17}
            \State $r_{0b}\coloneqq\oplus_{j=0}^{Len_{i}-1}{\langle{ET_{i0}}\rangle_{b}[j]\cdot{\llbracket{V}_{i0}\rrbracket_{b}[j]}}$.
            \State $r_{1b}\coloneqq\oplus_{j=0}^{Len_{i}-1}{\langle{ET_{i1}}\rangle_{b}[j]\cdot{\llbracket{V}_{i1}\rrbracket_{b}[j]}}$.
            \State {\color{red}Send to client} the $r_{0b}$ and $r_{1b}$.\label{AlgOpReadLine20}
            
            \smallskip
            \Statex \underline{\textit{Client:}}
            \State \textit{Receive} $r_{0b},r_{1b}$ from the server $\mathcal{S}_{b}$.\label{AlgOpReadLine21}
            \State Compute $r_{0}\coloneqq{r_{00}\oplus{r}_{01}}$ and $r_{1}\coloneqq{r_{10}\oplus{r}_{11}}$.
            \State \textit{Decrypt} the elements $r_{0}$ and $r_{1}$.
            
            \If{$r_{0}[0]\equiv{a}$ $\wedge$ ${\neg{found}}$}
                \State $recE\coloneqq{r_{0}}$, $found\coloneqq{true}$, 
                $fTab\coloneqq{0}$,
                $fLen\coloneqq{Len_{i}}$,
                $fPos\coloneqq{posR_{i0}}$, 
            \EndIf
            \If{$r_{1}[0]\equiv{a}$ $\wedge$ ${\neg{found}}$}
                \State $recE\coloneqq{r_{1}}$, $found\coloneqq{true}$, 
                $fTab\coloneqq{1}$,
                $fLen\coloneqq{Len_{i}}$,
                $fPos\coloneqq{posR_{i1}}$,  
            \EndIf
            
        \EndFor
        
        \State \textbf{end for}

        \State \textbf{Output} $fTab,fLen,fPos$. \label{AlgOpReadLine27}

    \end{algorithmic}
   \end{multicols}
\end{algorithm*}

\noindent \textbf{Oblivious access.} Generally, we divide the access phase into three steps: (i) Access each encrypted element from the element buffer and stash in \textit{reverse order} and modify the corresponding tag of the tag buffer and stash using write-only PIR (lines \ref{AlgoAcLine1}-\ref{AlgoAcLine14}); (ii) Utilize read-only PIR to access encrypted elements and use write-only PIR to modify tags sequentially for levels $\{\ell,...,L\}$, and then re-encrypt the accessed element and write it back to the buffer (lines \ref{AlAccETBegin}-\ref{AlAccETEnd}); (iii) Select a level for rebuild based on the access counter $ctr$ (lines \ref{AlAccRebBegin}-\ref{AlAccRebEnd}). Please see Protocol \ref{AlgoAccess} for the detailed description.

\noindent \textbf{Oblivious rebuild\footnote{Note that LO13 \cite{DBLP:conf/tcc/LuO13} and KM19 \cite{DBLP:conf/pkc/KushilevitzM19} require slight adjustments when build the bottom level. We explain the issue in Appendix \ref{AppDiss}.}.} We choose different rebuild protocols for the level $j$: (i) If $j\neq{L}$, the client accesses non-empty elements from tables before $j$th level and remaps the encrypted real and dummy elements to the $j$th level. See the process in Protocol \ref{AlgoRELL}. (ii) If $j\equiv{L}$, the client accesses non-empty elements in all tables, removes the dummies, and only remaps the encrypted real elements to the $L$th level. See the details in Protocol \ref{AlgoRL}.

\smallskip

\noindent\textbf{An example.} Fig. \ref{OurExample} illustrates a four-level example: (1) Before access, the $\ell$th, $\ell+1$th, and $L$th levels contain non-empty elements, while the $\ell+2$th level is empty. (2) During access, the client retrieves all elements from the element buffer $EB$ and stash $ES$, and uses read-only PIR to fetch elements from each table. The client then uses write-only PIR to update the corresponding tags. In this example, the required elements are found at the $\ell+1$th level. The client performs PIR-write on the corresponding position in the tag area of the $\ell+1$th level to modify the tag to a dummy value. For other levels, the client performs pseudo-write operations. (3) Finally, the client rebuilds the $\ell+2$th level. After rebuild, all non-empty tags and elements from the first two levels are stored at the $\ell+2$th level.
\smallskip

\subsection{Overhead Analysis}
\normalsize{We show the theoretical communication cost, bandwidth overhead, and storage cost of Cforam.}

\noindent\textbf{Communication cost (bit).} For the access process:

\begin{itemize}[leftmargin=*]
    \item Firstly, the client obtains elements in $EB$ and $ES$, whose capacities are both $\log{N}$. Then, the client utilizes PIR-write to modify the corresponding tags of the $TB$ and $TS$. The client needs to send two DPF keys and a tag to server $\mathcal{S}_{b}$ for $b\in\{0,1\}$. Thus, the communication cost of this step is $2B\log{N}+4\kappa\log{N}+2\Upsilon$ bits.

    \item Secondly, the client accesses elements for $L-\ell+1$ levels using PIR-read, which transfers four DPF keys and elements for each level. Then, the client modifies the corresponding tags using PIR-write by sending four DPF keys. Thus, this step consumes $(4B+8\kappa\log{N})*(L-\ell+1)$ bits.

    \item Finally, the client sends the accessed element and tag to two servers with $2\cdot(B+\Upsilon)$ bits communication cost.
\end{itemize}

By summing up the above expenses, we calculate the total communication cost for the access phase to be approximately $6B\log{N}+4\kappa\log{N}+8\kappa\log^{2}{N}+4\Upsilon$ bits.

For the rebuild algorithm:

\begin{itemize}[leftmargin=*]
    \item The $\ell$th level contains around $3\log^{2}{N}$ elements and triggers the rebuild protocol during every $\log{N}$ accesses. During the rebuild, the client first requests all elements from one server and tags from two servers. Then, the client sends them to the servers after modification and remapping. Therefore, rebuilding the $\ell$th level consumes ${3\log}^{2}{N}\cdot{(3B+4\Upsilon)}$ bits.

    \item The $i$th level for $i\in\{\ell+1,...,L-1\}$ can accommodate around $2^{i}$ elements, which is rebuilt every $2^{i+1}$ accesses, with a communication cost of ${2}^{i}\cdot{(3B+4\Upsilon)}$ bits.

    \item The $L$th level contains $N$ non-empty elements and executes the rebuild algorithm every $2^{L}$ times of accesses. During the rebuild, firstly, the client receives $2N$ elements and $4N$ tags from servers. Secondly, the client sends $2N$ elements to the server $\mathcal{S}_{0}$, which sends them back to the client after shuffling. Thirdly, after removing the dummies, the client resends the $N$ real elements to the server $\mathcal{S}_{1}$. The server $\mathcal{S}_{1}$ then sends $N$ real elements back to the client after shuffling. Finally, the client invokes \textit{Insert} procedure to reconstruct the $L$th level, which transfers $N$ elements and $2N$ tags. Thus, the $L$th level rebuild process consumes around $2NB+4N\Upsilon+4NB+2NB+NB+2N\Upsilon=9NB+6N\Upsilon$ bits.
\end{itemize}

Therefore, the amortized communication cost of the rebuild is:

\begin{equation}
\begin{aligned}
    & \frac{{3\log}^{2}{N}\cdot{(3B+4\Upsilon)}}{\log{N}}+\sum_{i=\ell+1}^{L-1}{\frac{{2}^{i}\cdot{(3B+4\Upsilon)}}{2^{i+1}}}+\frac{9NB+6N\Upsilon}{2^{L}}\\ & \approx{10B\log{N}+14\Upsilon\log{N}}
\end{aligned}
\end{equation}

By summing up the above cost, Cforam consumes a total of $16B\log{N}+4\kappa\log{N}+8\kappa\log^{2}{N}+14\Upsilon\log{N}$ bits.

Actually, unlike LO13 \cite{DBLP:conf/tcc/LuO13} and KM19 \cite{DBLP:conf/pkc/KushilevitzM19}, Cforam does not require the client to retrieve empty elements in the hash table during the rebuild process. For example, when rebuilding the $\ell$th level, the number of non-empty elements that need to be rebuilt may be less than $3\log^{2}N$. Thus, the actual communication cost is less than the above result.
\smallskip

\noindent\textbf{Bandwidth (\# block).} For a block size $B=\Omega(\kappa\log{N})\gg\Upsilon$, the amortized bandwidth overhead is around $24\log{N}$. For a block size $B=\kappa=\Omega(\log{N})\gg\Upsilon$, the amortized bandwidth becomes $20\log{N}+8\kappa\log{N}$.
\smallskip

\noindent \textbf{Storage.} The client stores two secret keys of $B$ bits for $B = \Omega(\log{N})$, one bit flag $full_{i}$ for each level $i \in \{\ell, \dots, L\}$, two length indicators of $O(\log{N})$ bits each, and a global counter of $O(\log{N})$ bits. Therefore, the local storage is $O(1)$ blocks. The servers store element and tag tables of size $O(N)$.

\section{The Optimized Version Cforam+}\label{SecOptScheme}

The above scheme consumes $O(\kappa\log{N})$ bandwidth for a block size $B=\Omega(\log{N})$. The cost is subject to the second step of the access phase, where the client must generate $O(\kappa\log{N})$-sized DPF keys for each of the $O(\log{N})$ levels. For optimization, we wonder if we can compress the size of DPF keys for less bandwidth consumption.

Hence, we introduce an optimized version, Cforam+, which reduces the bandwidth to $O(\log{N})$ by generating only $O(1)$ DPF keys and some shift information. Cforam+ separates element reading and tag writing of the access phase (modifying lines \ref{AlAccETBegin}-\ref{AlgoAcLine24} of Protocol \ref{AlgoAccess}), while keeping the setup and rebuild phases unchanged. For each access, Cforam+ first reads each level of the element area and outputs auxiliary information that records the location of the required element. This information is used for tag writing. Correctness and security analyses are provided in Appendix \ref{AppCforam+}.

\subsection{Optimized Element Reading}
We present the optimized reading operation for the element area in Protocol \ref{AlgOpRead}. For the $\ell$th level, the access flow remains unchanged, as it does not affect the asymptotic bandwidth complexity (line \ref{AlgOpReadLine1}). For levels $i \in \{\ell, \ldots, L\}$, the client first sends two random DPF keys to each of the two servers, which then execute $DPF.Eval$ to generate two vectors (lines \ref{AlgOpReadLine2}-\ref{AlgOpReadLine4}). Next, for each level $i$, the client sends only $O(\log{N})$-sized shift information to the servers. The servers perform an XOR operation on these vectors to get two new sub-vectors with the length $Len_{i}$, cyclic left shift them, execute the XOR-based inner dot computation, and return the results to the client (lines \ref{AlgOpReadLine5}-\ref{AlgOpReadLine27}). Finally, the client outputs $fTab$, $fLen$, and $fPos$, which are passed as parameters to the optimized writing operation.

\smallskip
\noindent\textbf{An example.} An example of optimized element reading is shown in Fig. \ref{OpreadExample}. Assume that $Len_{L} = 8$ and each server holds a secret-shared unit vector $V_2$. For the $i$th level, where $Len_i = 4$, the server performs the following steps: (1) Splits the secret-shared vector into ${Len_L}/{Len_i} = 2$ sub-vectors; (2) Applies the XOR operation to generate a new vector; (3) Performs a cyclic left shift by an offset of $3$; (4) Calculates the XOR-based inner product between the new vector and the element hash table to produce the final result.

\subsection{Optimized Tag Writing}

Given $fTab$, $fLen$, and $fPos$, we provide the optimized writing operation for the tag area in Protocol \ref{AlgOpWrite}.

Firstly, the operations for the $\ell$th level remain unchanged from Protocol \ref{AlgoAccess} (line \ref{AlgOpWriteLine1}).
Secondly, the client generates a DPF key for the point function $P_{ind,1}$ with a $4{Len}{L}$ input domain, where the target point is $ind = fPos + fLen + fTab \cdot 2{Len}{L}$ (line \ref{AlgOpWriteLine2}). The client then sends the DPF key to two servers.
Thirdly, each server $\mathcal{S}_{b}$ executes $DPF.Eval$ to obtain a vector (line \ref{AlgOpWriteLine3}). For each level $i \in \{\ell+1, \ldots, L\}$, the server divides the vector $V_b$ into two equal sub-vectors and cyclic shifts them by $Len_i$ (lines \ref{AlgOpWriteLine4}-\ref{AlgOpWriteLine6}). Finally, both servers perform PIR-write on the hash tables of the tag area in the range $[0, Len_{i}-1]$ to update tags (lines \ref{AlgOpWriteLine7}-\ref{AlgOpWriteLine8}).

\smallskip
\noindent\textbf{An example.} An example of optimized tag writing is shown in Fig. \ref{OpwriteExample}. Suppose the client finds the desired element at position $1$ in the second hash table of the $i$th level, i.e., $fPos=1,fTab=1,fLen=2$. Initially, the client generates a DPF key for the target point $ind = fPos + fLen + 2Len_L$ and sends it to two servers. Each server then generates a vector with the input domain $4Len_{L}$. Then, it performs the following steps to update tags: (1) Splits the vector into two sub-vectors; (2) Cyclic shifts the second sub-vector by $Len_i$; (3) Updates tags by taking the first $Len_i$ locations of the shifted vector, performing the product with a random tag $\tau_{r}$, and applying XOR operations to modify the tag hash table of the $i$th level.



\begin{figure}[t!]
    \centering
    \subfigure[Element reading]
    {
        \begin{minipage}{\linewidth}
            \centering
            \includegraphics[width=.8\textwidth]{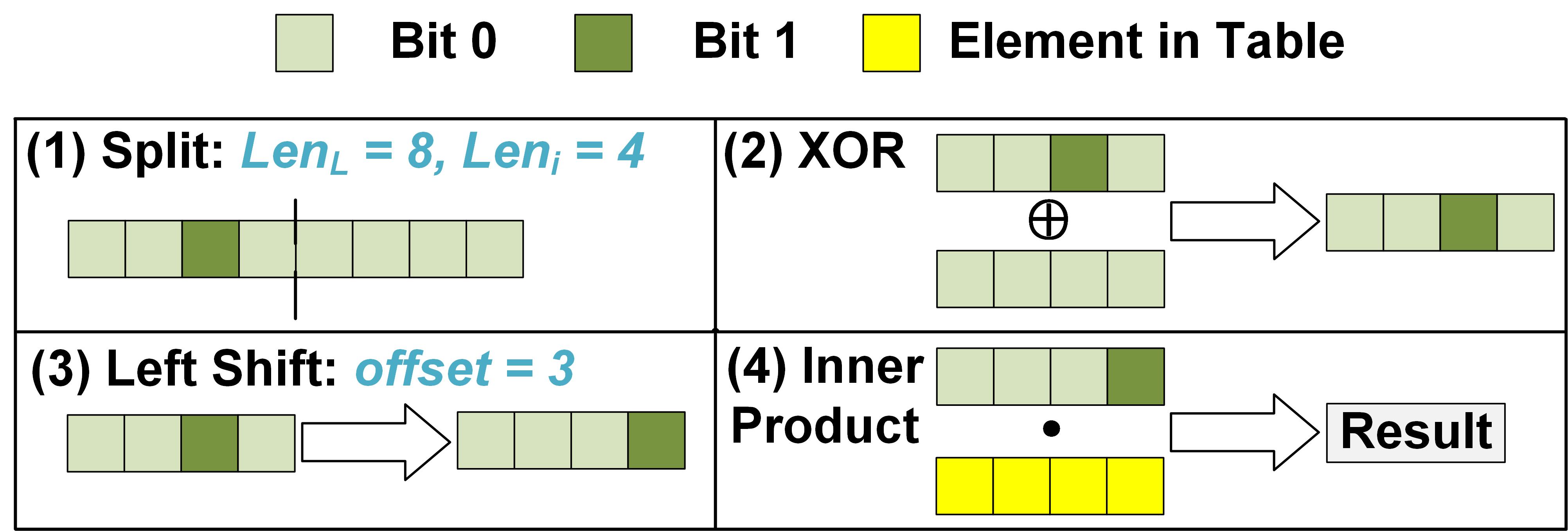}
            \label{OpreadExample}
        \end{minipage}    
    }
    \quad\quad
    \subfigure[Tag writing]
    {
        \begin{minipage}{\linewidth} 
            \centering
            \includegraphics[width=.8\textwidth]{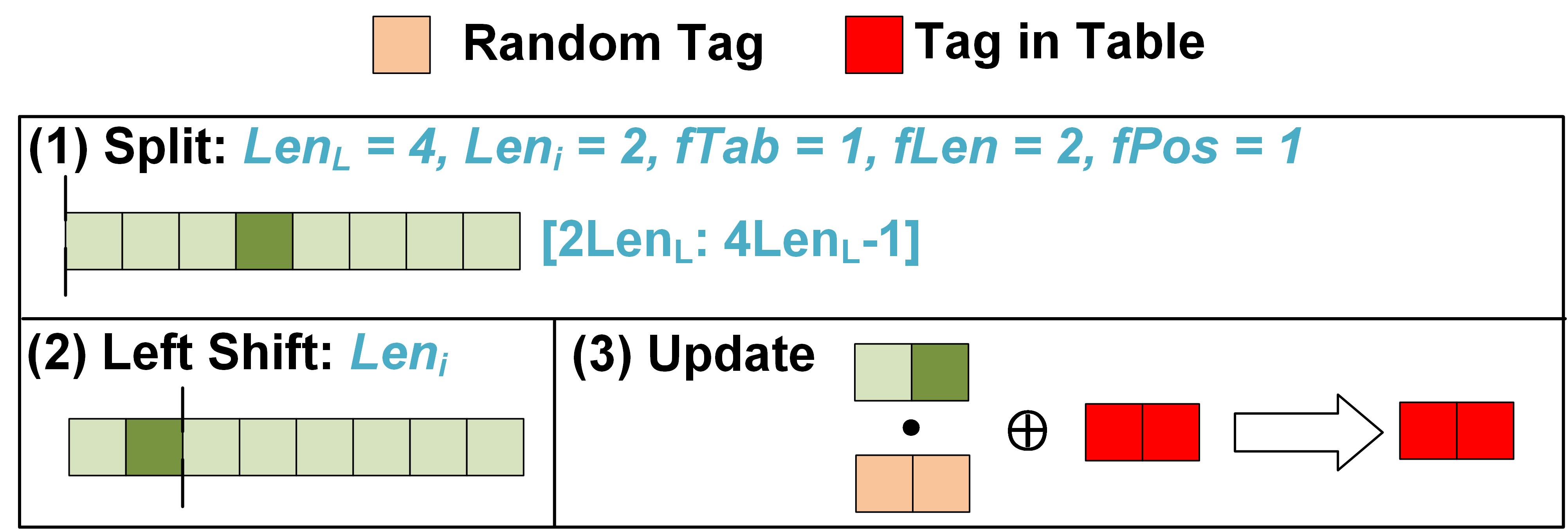}
            \label{OpwriteExample}
        \end{minipage}
    }
    \caption{Examples of optimized reading and writing. For ease of understanding, we only display the flow of the second table at the $i$th level. The diagram uses these colors to represent the components: (i) \textbf{Green}: A light green block represents bit $0$, while a dark green block represents bit $1$. Multiple light green blocks and one dark green block form a secret-shared vector after $DPF.Eval$. (ii) \textbf{Yellow}: Each element in hash table. (iii) \textbf{Red}: Each tag in hash table. (iv) \textbf{Orange}: The random tag $\tau_{r}$ of Protocol \ref{AlgOpWrite}.}
\end{figure}

\subsection{Overhead Analysis}
This section gives the communication cost and bandwidth of Cforam+. Note that the storage cost remains the same as Cforam.

\noindent\textbf{Communication cost (bit).} For the access phase: 

\begin{itemize}[leftmargin=*]
\setlength{\leftmargin}{0pt}

    \item Firstly, the client obtains the elements in $EB$ and $ES$ and modifies the tag in $TB$ and $TS$, which is the same as the Cforam and costs $2B\log{N}+4\kappa\log{N}+2\Upsilon$ bits.
    
    \item Secondly, the client accesses elements and modifies tags for each level. For $\ell$th level, we do the same as Cforam, which consumes $4B+8\kappa\log{N}$ bits. For $i$th level where $i\in\{\ell+1,...,L\}$, the client will invoke Protocol \ref{AlgOpRead} to read elements and then call Protocol \ref{AlgOpWrite} to modify the corresponding tags. In Protocol \ref{AlgOpRead}, the client only sends four DPF keys to servers and then receives $4\cdot(L-\ell)$ results from servers. The communication cost is around $4\cdot(\kappa\log{N}+B\log{N})$ bits. In Protocol \ref{AlgOpWrite}, the client sends a DPF key to two servers with $2\kappa\log{N}$ bits. Thus, this step consumes $14\kappa\log{N}+4B\log{N}$ bits.

    \item Finally, the client sends the accessed element and tag to two servers, which results in $2\cdot(B+\Upsilon)$ bits.
\end{itemize}

\begin{algorithm}[!t]
    \renewcommand{\thealgorithm}{6}
    
    \floatname{algorithm}{Protocol}
    \caption{$\Pi_{OpTagWrite}${\color{blue}\Comment{Optimized PIR-write for tag area.}}} \label{AlgOpWrite}
    
    \begin{algorithmic}[1]
        \Statex \textbf{Input} $fTab,fLen,fPos$.
        \Statex \underline{\textit{Client:}}
        \State \textbf{If} $full_{\ell}\neq{0}$ \textbf{then} access the $\ell$th level as in Protocol \ref{AlgoAccess}.\label{AlgOpWriteLine1}
        \State Compute $(k_{0},k_{1})\leftarrow{DPF}.Gen(1^{\kappa},fPos+fLen+fTab\cdot{2{Len}_{L}},4{Len}_{L},1)$.\label{AlgOpWriteLine2}
        \State {\color{red}Send to server $\mathcal{S}_{b}$} the $k_{b}$ and a random tag $\tau_{r}$.
        \smallskip
        \Statex \underline{\textit{Servers $\mathcal{S}_{b}$:}}
        \State \textit{Receive} $k_{b},\tau_{r}$ and invoke the $DPF.Eval(b,k_{b},j)$ for each $j\in\{0,...,4Len_{L}-1\}$ to generate the vector $\llbracket{V}\rrbracket_{b}$.\label{AlgOpWriteLine3}
        \For{level $i=\ell+1,...,L$}\label{AlgOpWriteLine4}
            \State Split vector $\llbracket{V}\rrbracket_{b}$ into two sub-vectors of equal length denoted as $\llbracket{V_{0}}\rrbracket_{b}\coloneqq{\llbracket{V}\rrbracket_{b}[0:2Len_{L}-1]}$, $\llbracket{V_{1}}\rrbracket_{b}\coloneqq{\llbracket{V}\rrbracket_{b}[2Len_{L}:4Len_{L}-1]}$.\label{AlgOpWriteLine5}
            \State Cyclic left shift $Len_{i}$ for $\llbracket{V_{0}}\rrbracket_{b}$ and $\llbracket{V_{1}}\rrbracket_{b}$.\label{AlgOpWriteLine6}
            \State $\llbracket{TT_{i0}}\rrbracket_{b}[j]\coloneqq{\llbracket{TT_{i0}}\rrbracket_{b}[j]\oplus(\tau_{r}\cdot{\llbracket{V_{0}}\rrbracket_{b}[j]})}$ for $j\in\{0,...,Len_{i}-1\}$.\label{AlgOpWriteLine7}
            \State $\llbracket{TT_{i1}}\rrbracket_{b}[j]\coloneqq{\llbracket{TT_{i1}}\rrbracket_{b}[j]\oplus(\tau_{r}\cdot{\llbracket{V_{1}}\rrbracket_{b}[j]})}$ for $j\in\{0,...,Len_{i}-1\}$.\label{AlgOpWriteLine8}
        \EndFor
        \State \textbf{end for}
    \end{algorithmic}
\end{algorithm}

For the rebuild phase, the cost remains unchanged, i.e., $10B\log{N}+14\Upsilon\log{N}$ bits. By accumulating the above results, our total cost is around $18\kappa\log{N}+16B\log{N}+14\Upsilon\log{N}$ bits.
\smallskip

\noindent\textbf{Bandwidth (\# block).} For the block size $B=\Omega(\kappa\log{N})\gg\Upsilon$, the amortized bandwidth overhead is around $16\log{N}$. While for the block size $B=\kappa=\Omega(\log{N})\gg\Upsilon$, the amortized bandwidth becomes around $34\log{N}$.

\section{Evaluations}\label{SecEval}

To benchmark the performance, we implement our schemes and compare them with LO13 \cite{DBLP:conf/tcc/LuO13}, AFN17 \cite{DBLP:conf/pkc/AbrahamFNP017}, and KM19 \cite{DBLP:conf/pkc/KushilevitzM19}. The code is available at \href{https://github.com/GfKbYu/Cforam}{https://github.com/GfKbYu/Cforam}.

\begin{figure*}[!t]
	\centering

    \centering
    \includegraphics[width=.75\linewidth]{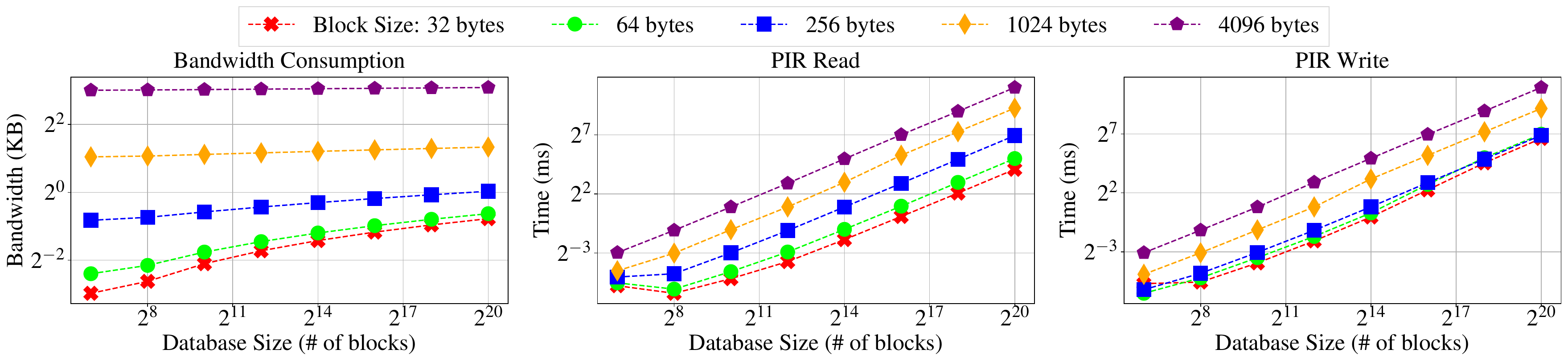}

	\caption{Bandwidth and time costs of a single PIR operation for different database and block sizes. For bandwidth, the consumption of PIR-read and PIR-write operations is the same.}
	\label{fig:pir_cost}
    
\end{figure*}

\subsection{Experimental Setup}

\noindent\textbf{Implementation details.} Our implementations are developed in C++ using gcc version 9.4.0. We use AES-NI instructions to create a high-performance PRF and realize the hash function based on it. Additionally, AVX/AVX2 instructions are leveraged to accelerate PIR computations. For the implementation of LO13 \cite{DBLP:conf/tcc/LuO13}, we set the top level capacity to $c=2\log{N}$ and the boundary level to $\ell_{cuckoo}=\log{\log{N}}$. For levels $i<\ell_{cuckoo}$, we use a standard hash table with $c\cdot2^{i-1}$ buckets, each of size $3\log{N}/\log\log{N}$ as recommended by the authors. For levels $i\geq\ell_{cuckoo}$, we use the two-table cuckoo hashing with $c\cdot2^{i}$ positions.
For AFN17 \cite{DBLP:conf/pkc/AbrahamFNP017}, we deploy the scheme using a $\log{N}$-ary tree and implement the recursive ORAM \cite{DBLP:conf/ccs/StefanovDSFRYD13,garg2016tworam} to store the position map on the servers.
For KM19 \cite{DBLP:conf/pkc/KushilevitzM19}, we configure $d=\log{N}$ so that each level contains $\log{N}-1$ hash tables. We select two-tier hashing \cite{DBLP:conf/asiacrypt/ChanGLS17} to construct hash tables and use the bitonic sort \cite{DBLP:conf/afips/Batcher68} to implement oblivious two-tier hashing.
\smallskip

\noindent \textbf{Platform.} Our benchmark runs on a 64-bit machine with AMD Ryzen 9 5950X 16-Core processors, running on Ubuntu 20.04 x64, with 128 GB RAM and 256 GB SSD.
\smallskip

\noindent \textbf{Network Setup.} All implementations use the TCP protocol to establish network communications. We simulate two network conditions using the linux tool \textbf{tc}: (i) a LAN setting with $80$ Gbit/s bandwidth and $50$ \textmu{s} round-trip time (RTT), and (ii) a WAN setting with $100$ Mbit/s bandwidth and $20$ ms RTT.

\smallskip

\noindent \textbf{Test Metrics.} We begin by measuring the bandwidth and time cost of PIR operations. Then, we evaluate the amortized bandwidth and time cost of ORAM schemes under the following conditions: (i) varying the database size from $2^{6}$ to $2^{20}$ while keeping the block size constant at $32$ bytes, and (ii) varying the block size from $32$ bytes to $4$ kilobytes for a database of size $2^{12}$. In all tests, the number of accesses is set to be an epoch. For hierarchical ORAM schemes, e.g., Ours, LO13, and KM19, an epoch is the number of accesses that trigger the rebuild protocol for the bottom level. For AFN17, an epoch denotes the number of accesses that trigger an eviction.

\subsection{Cost of PIR Operations}
Our schemes rely on DPF-based PIR. Fig. \ref{fig:pir_cost} illustrates the bandwidth and time overhead of PIR operations.

For bandwidth, we observe that it increases with the database size when the block size is relatively small (e.g., $\leq 256$ bytes). However, when the block size is large (e.g., $\geq 1$ KB), bandwidth consumption becomes almost independent of the database size. This behavior is expected since each PIR operation involves transferring two DPF keys of size $\kappa \log{N}$ and two data blocks of size $B$. As the block size approaches $\kappa$, bandwidth is dominated by the DPF keys, leading to a logarithmic increase in bandwidth consumption. When $B \gg \kappa$, the bandwidth stabilizes to the size of two blocks.

Regarding time cost, both PIR-read and PIR-write show linear growth with database size because each PIR operation requires $O(N)$ symmetric key computations. For a database of $2^{20}$ blocks, the time cost of a single PIR computation ranges from $16$ ms to $2$ s. This suggests that PIR computations may become a bottleneck in large-scale databases, especially in a low-latency LAN setting. However, in a WAN setting with a higher latency, the time cost due to PIR computations may not significantly affect the overall ORAM runtime. The next section shows the overall impact of PIR computations on our schemes. Additionally, we discuss computation optimizations through GPU acceleration in Appendix \ref{AppDiss}.

\begin{figure*}[!t]
	\centering

    \centering
    \includegraphics[width=.75\linewidth]{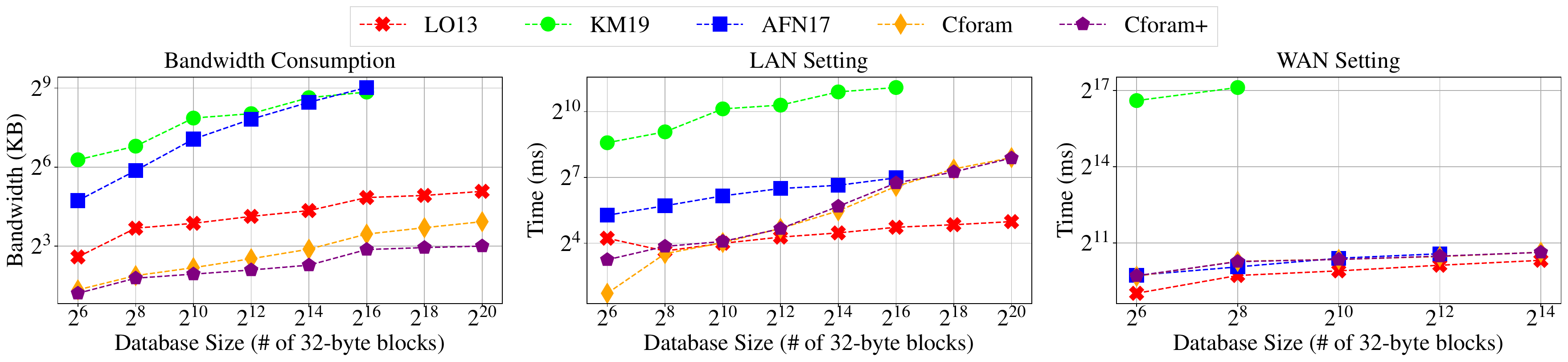}
    \caption{Amortized bandwidth and time cost of our schemes, LO13 \cite{DBLP:conf/tcc/LuO13}, AFN17 \cite{DBLP:conf/pkc/AbrahamFNP017}, and KM19 \cite{DBLP:conf/pkc/KushilevitzM19} for different database sizes.}
	\label{fig:oram_diff_db}
    
\end{figure*}

\begin{figure*}[!t]
	\centering

    \centering
    \includegraphics[width=.75\linewidth]{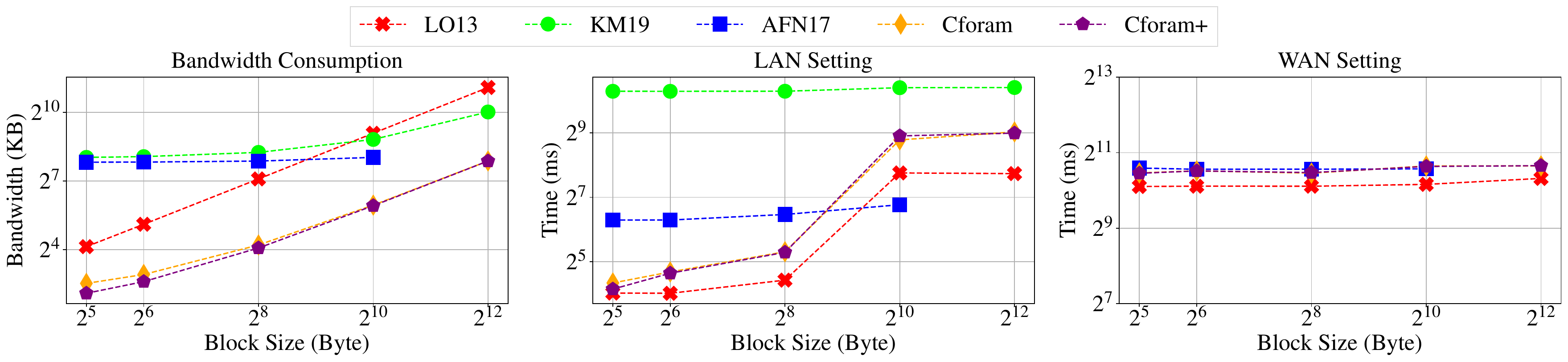}
    \caption{Amortized bandwidth and time cost of our schemes, LO13 \cite{DBLP:conf/tcc/LuO13}, AFN17 \cite{DBLP:conf/pkc/AbrahamFNP017}, and KM19 \cite{DBLP:conf/pkc/KushilevitzM19} for different block sizes.}
	\label{fig:oram_diff_block}
    
\end{figure*}

\subsection{Overall Comparisons}
This section compares our two schemes, Cforam and Cforam+, with LO13 \cite{DBLP:conf/tcc/LuO13}, AFN17 \cite{DBLP:conf/pkc/AbrahamFNP017}, and KM19 \cite{DBLP:conf/pkc/KushilevitzM19} under different database and block sizes. Note that we also test the results of our schemes with the distributed Duoram \cite{DBLP:conf/uss/VadapalliHG23} in Appendix \ref{AppDiss}.

\smallskip

\noindent \textbf{Performance under different database sizes.} Fig. \ref{fig:oram_diff_db} presents the results under different database sizes. Note that AFN17 is tested with database sizes ranging from $2^{6}$ to $2^{16}$, as the server storage exceeds memory size for larger databases. KM19's tests are limited to $2^{16}$ due to its long running time for larger databases.

Cforam+ shows moderate improvement over Cforam. When the database size is $2^{20}$, the bandwidth of Cforam+ is approximately halved compared to Cforam. Both our schemes outperform others in terms of bandwidth. The bandwidth is reduced by $2\sim4\times$ compared to LO13 and by $16\sim64\times$ compared to KM19 and AFN17.

Regarding time cost, our schemes outperform KM19 and AFN17 in the LAN setting. KM19 uses oblivious sort, which involves many interaction roundtrips, resulting in higher time costs. AFN17 requires recursive position maps during the access phase, increasing its runtime. LO13 has a lower time cost than our schemes in the LAN setting, as it does not require $O(N)$ computations. However, in the WAN setting, the time cost due to computations is small. For example, our schemes, LO13, and AFN17 all maintain a time cost of around $2^{10}$ ms, which is significantly better than KM19.

\smallskip

\noindent \textbf{Performance under different block sizes.} Fig. \ref{fig:oram_diff_block} presents the performance of ORAM schemes with different block sizes. We omit the time overhead of KM19 \cite{DBLP:conf/pkc/KushilevitzM19} under the WAN setting, as it requires a huge amortized access time of over $2^{16}$ ms.

As the block size increases, the bandwidth of Cforam gradually approaches that of Cforam+. Compared with Cforam, Cforam+ reduces the number of transmitted DPF keys from $O(\log{N})$ to $O(1)$. For small block sizes (e.g., $32$ bytes), bandwidth is influenced by both the number of transmitted DPF keys and elements, giving Cforam+ a significant advantage. For larger block sizes (e.g., $\geq1$ KB), the number of transmitted elements becomes the primary factor affecting bandwidth, leading to similar bandwidth consumption between our two schemes. Fig. \ref{fig:oram_diff_block} also shows that both our schemes and LO13 \cite{DBLP:conf/tcc/LuO13} exhibit a linear increase in bandwidth with increasing block size, while AFN17 \cite{DBLP:conf/pkc/AbrahamFNP017} and KM19 \cite{DBLP:conf/pkc/KushilevitzM19} show slower growth. This is expected, as AFN17 and KM19 require block sizes of $\Omega(\log^{3}{N})$ and $\Omega(\log^{2}{N})$ respectively to achieve sub-logarithmic bandwidth.

For time cost, our schemes perform similarly to LO13 and AFN17, slightly higher than LO13 and lower than AFN17 for small blocks. Both schemes outperform KM19 in the LAN setting.
In the WAN setting with a high RTT, the performance gap diminishes, as the overall time cost is mainly influenced by transmission time. The results indicate that our schemes, along with LO13 and AFN17, have similar time costs of around $\sim 2^{10}$ ms for different block sizes.

\section{Conclusion}\label{SecConclude}
Since conventional two-server ORAM schemes could be promoted on reducing bandwidth consumption while maintaining $O(1)$ local storage, we propose two new client-friendly schemes based on a pairwise-area setting within a hierarchical structure and DPF-based PIR. These schemes are particularly efficient for lightweight clients with limited resources. Benchmarking results demonstrate that our solutions provide competitive performance in terms of both bandwidth and time cost compared to previous works.

\appendices

\section{Standard Hierarchical ORAM}\label{AppHORAM}

A standard hierarchical ORAM scheme typically contains the setup, the access, and the rebuild phase with the following workflows:

\begin{itemize}[leftmargin=*]
    \item \textbf{Setup phase.} Given $N$ elements, the server constructs $1+\log{N}$ levels, each contains a hash table, denoted as $T_{0},...,T_{\log{N}}$. Each table $T_{i}$ can accommodate up to $2^{i}$ elements, including both real and dummy elements. Initially, the client stores all $N$ real elements in $T_{\log{N}}$.

    \item \textbf{Access phase.} Given the operation $(op,a,writeV)$ where $op\in\{read,write\}$, the client sequentially accesses each non-empty level $T_{0},...,T_{\log{N}}$ based on the hash results of the address $a$. If an element with the address $a$ is found at a position of table $T_{i}$, the client may mark the element at that position as a dummy. Then, the client accesses random positions for the following levels. If $op\equiv{read}$, the client re-encrypts and writes the found element to $T_{0}$; otherwise, the client encrypts and writes the element $(a,writeV)$ to $T_{0}$.

    \item \textbf{Rebuild phase.} After every $2^{i}$ accesses, the client triggers a rebuild of the $i$th level. During the rebuild, the client retrieves and merges all elements from tables $T_{0},...,T_{i}$, and then obliviously stores them at the $i$th level.
\end{itemize}

\section{Two-server PIR Constructions}\label{AppPIR}
Kushilevitz and Mour \cite{DBLP:conf/pkc/KushilevitzM19} introduced constructions for both read-only PIR and write-only PIR based on DPF \cite{DBLP:conf/eurocrypt/GilboaI14,DBLP:conf/eurocrypt/BoyleGI15,DBLP:conf/ccs/BoyleGI16}. We review these two constructions below.

\smallskip
\noindent \textbf{Construction of read-only PIR.}
During the \textbf{setup} phase, given an input array $X$ with length $n$, the client generates two replicas of $X$ denoted as $\langle{X}\rangle_{0}$ and $\langle{X}\rangle_{1}$, and then stores them on two servers $\mathcal{S}_{0}$ and $\mathcal{S}_{1}$, respectively.

To \textbf{read} a value at index $i$, the client interacts with two servers to privately retrieve the value $X[i]$ through the following steps: (i) The client invokes $DPF.Gen(1^{\kappa},i,n,1)$, which outputs the keys $k_{0}$ and $k_{1}$. The client then sends the two keys to two servers, respectively. (ii) Each server $\mathcal{S}_{b}$, where $b\in\{0,1\}$, invokes $DPF.Eval(b,k_{b},j)$ to compute the value $\llbracket{V_{i}}\rrbracket_{b}[j]$ for each $j\in\{0,...,n-1\}$. The server $\mathcal{S}_{b}$ then calculates $r_{b}=\oplus_{j=0}^{n-1}{\langle{X}\rangle_{b}[j]\cdot{\llbracket{V_{i}}\rrbracket_{b}[j]}}$ and sends $r_{b}$ to the client. (iii) The client computes the result $X[i]=r_{0}\oplus{r_{1}}$.
    

The above scheme consumes $O(\kappa{\log{n}}+2B)$ bits and $O(n\log{n})$ symmetric key computations. Note that Boyle et al. \cite{DBLP:conf/ccs/BoyleGI16} introduced an optimization technique called full domain evaluation, which allows the evaluation of all points with  $2n$ computational overhead.

\smallskip

\noindent \textbf{Construction of write-only PIR.} Similarly, DPF can be utilized to implement write-only PIR. \textbf{Initially}, the client uploads the secret sharing of $X$, denoted as $\llbracket{X}\rrbracket_{0}$ and $\llbracket{X}\rrbracket_{1}$, to two servers.

When the client wishes to \textbf{write} a new value $x_{new}$ at position $i$,  assuming the client knows the old value  $x_{old}$ currently stored at position $i$ of the array $X$, the client calculates $x_{mod}=x_{new}\oplus{x_{old}}$. Afterwards, the client interacts with two servers to write the value using the following steps: (i) The client invokes $DPF.Gen(1^{\kappa},i,n,1)$, which outputs the keys $k_{0}$ and $k_{1}$. The client then sends the two keys and the value $x_{mod}$ to two servers. (ii) Each server $\mathcal{S}_{b}$, where $b\in\{0,1\}$, invokes $DPF.Eval(b,k_{b},j)$ to calculate the $\llbracket{V_{i}}\rrbracket_{b}[j]$ and overwrites values at position $j$ with $\llbracket{X}\rrbracket_{b}[j]\oplus({{x}_{mod}\cdot{\llbracket{V_{i}}\rrbracket_{b}[j]}})$ for each $j\in\{0,\ldots,n-1\}$, where $\llbracket{X}\rrbracket_{b}[j]$ is the value at position $j$ before the write operation.

We can implement the function $\mathbf{Build}$ by simply computing $X'=\llbracket{X}\rrbracket_{0}\oplus{\llbracket{X}\rrbracket_{1}}$, where $\llbracket{X}\rrbracket_{0}$ and $\llbracket{X}\rrbracket_{1}$ are the arrays currently stored on two servers, respectively.

The above scheme also consumes $O(\kappa{\log{n}}+2B) $ bits and $O(n)$ computational cost, similar to read-only PIR.

The correctness and security are guaranteed by the DPF. Readers can follow \cite{DBLP:conf/pkc/KushilevitzM19} for the analysis.

\section{Correctness and Security of Cforam}\label{AppCforam}

\subsection{Correctness}
The following three lemmas guarantee the correctness of Cforam.

\begin{lemma}
\normalfont
    If an element is accessed for the first time, it will be found at the $L$th level or $\ell$th level.
\end{lemma}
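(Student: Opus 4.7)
My plan is to track the location of the element from the moment it was most recently \emph{inserted} (either during $\Pi_{Setup}$ or at the most recent invocation of $\Pi_{RebuildL}$) up until its first access, and to argue by an invariant that no intervening operation can move it off of the $L$th or $\ell$th level.

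First, I would establish the base case. Both $\Pi_{Setup}$ and $\Pi_{RebuildL}$ place every real element into the server storage by invoking \textbf{Insert}$(a,v,L)$. By inspection of \textbf{Insert}, the element is routed to the cuckoo tables of the $L$th level, and only if that cuckoo insertion overflows is it diverted into the $\ell$th level. By the cuckoo build-failure theorem cited earlier, with $N$ elements in a table of capacity $\Theta(2^{L})=\Theta(N)$ and a stash of size $\Theta(\log N)$, the failure probability is negligible in $\lambda = N$. Hence, immediately after insertion, the element is at the $L$th level, or (with negligible probability) at the $\ell$th level.

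Next, I would argue the invariant is preserved by every event that can occur before the first access. An Access to a different element never reads from or writes to this element's slot, so it is irrelevant. The only mechanism that can relocate an already-stored element is a rebuild. Inspecting $\Pi_{Rebuild}(j)$ for $j<L$ and $\Pi_{Rebuild}(\ell)$, the source of elements is always the buffer, the stash, and levels $\{\ell,\ell+1,\dots,j-1\}$; crucially, the $L$th level's contents are never drained by any intermediate rebuild. Consequently an element residing at level $L$ stays at level $L$ until the next $\Pi_{RebuildL}$, which re-invokes \textbf{Insert}$(\cdot,\cdot,L)$ and thus re-establishes the base case. Likewise, an element that has never been accessed has never entered the buffer, so it cannot be relocated through the buffer path.

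The one delicate case, which I expect to be the main obstacle, is an element that ended up at the $\ell$th level through an overflow at insertion: a subsequent $\Pi_{Rebuild}(j)$ with $j>\ell$ \emph{does} sweep elements out of the $\ell$th level into level $j$, which at first glance appears to violate the statement. I would dispose of this in two steps. First, note that $\Pi_{Rebuild}(\ell)$ re-inserts those overflow elements via \textbf{Insert}$(\cdot,\cdot,\ell)$, so they remain within the $\ell$th-level structure (tables plus stash). Second, the only way an overflow element could migrate to an intermediate level is if a genuine overflow occurred at the most recent \textbf{Insert}$(\cdot,\cdot,L)$; by the cuckoo bound above this event has probability $\mathrm{negl}(\lambda)$, and so the lemma's ``$L$th or $\ell$th level'' clause correctly captures both the overwhelming-probability outcome and the negligible-probability overflow, concluding the proof.
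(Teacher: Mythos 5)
Your proof takes a considerably more ambitious route than the paper's: the paper's proof discusses only where $\Pi_{Setup}$ places elements and then asserts the conclusion, whereas you attempt a full invariant argument tracking the element from insertion through every intervening rebuild up to its first access. Unfortunately, the step you use to discharge the ``delicate case'' rests on a misreading of the cuckoo machinery.

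You claim that overflow from $\textbf{Insert}(\cdot,\cdot,L)$ is a negligible-probability event and therefore the ``$\ell$th level'' clause of the lemma is only activated with negligible probability. This is not what the scheme (or the paper's own proof) says. The $L$th-level cuckoo table has no stash; routing unplaceable elements to level $\ell$ \emph{is} the stash mechanism, and the paper states explicitly that ``up to $O(\log N)$ elements may overflow and be placed at the $\ell$th level.'' Theorem 1 gives negligible probability that \emph{more than} $\Theta(\log N)$ elements are unplaceable, and hence that the $\ell$th level's own stash overflows — it does not say that the event of \emph{any} overflow is negligible. Overflow to level $\ell$ is an expected, designed-for occurrence, not a rare one.

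Because of this, your handling of the delicate case does not go through. You correctly observe that $\Pi_{Rebuild}(j)$ for $j>\ell$ drains level $\ell$ into level $j$, so a never-accessed element that overflowed to level $\ell$ at setup could in principle end up at an intermediate level before its first access; your only defense is that its presence at level $\ell$ is itself negligible, which, as above, is false. So the case remains genuinely open in your argument. (For what it is worth, the paper's proof does not engage with rebuild interactions at all: it argues only that setup places each element at level $L$ or $\ell$ and that the level-$\ell$ build succeeds with overwhelming probability, then jumps to the conclusion. So the gap you noticed is real, but your proposed closure of it is not sound.)
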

\begin{proof}
\normalfont
    In the setup phase, the client attempts to store all elements at the $L$th level. Up to $O(\log{N})$ elements may overflow and be placed at the $\ell$th level. The $\ell$th level can accommodate $O(\log^2{N})$ elements, with an $O(\log{N})$-sized stash, which satisfies the conditions of Theorem 1 and has a negligible build failure probability. Therefore, the first access to an element will result in it being found at either the $L$th level or the $\ell$th level.
\end{proof}

\begin{lemma}
\normalfont
     Two real elements with a same virtual address `$a$' will never appear simultaneously at the same level.
\end{lemma}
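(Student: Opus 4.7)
My plan is to prove the stronger invariant that at any point during the execution of Cforam, each virtual address $a$ has \emph{at most one} real copy in the entire storage (across the buffer, stash, and all levels $\ell,\dots,L$ of the element area). The lemma follows immediately.

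I would proceed by induction on the sequence of operations (Setup, Access, and the implicit Rebuilds it triggers). For the \textbf{base case}, after $\Pi_{Setup}$ each distinct address has been inserted exactly once by the \textbf{Insert} routine, and Theorem~1 guarantees that cuckoo allocation succeeds with overwhelming probability, so each address has exactly one real copy (placed at level $L$, or spilled into the $\ell$th level on overflow). For the \textbf{inductive step} I would treat Access and Rebuild separately. For an Access to address $a$: by the inductive hypothesis there is at most one real copy of $a$; the protocol scans the buffer, stash, and every non-empty level (using the cuckoo lookup positions $H(hk_i,\tau)$ for the two tables at each level). If a real copy is located, say at position $p$ of some table, then $posW$ is set to $p$ and the protocol issues $\Pi_{WPIR}.Write(p,\tau,\tau\oplus\tau_{rand},\cdot)$; since $\tau_{rand}$ is sampled as a random non-zero value, the stored tag changes from $\tau=F(tk,a)$ to $\tau\oplus\tau_{rand}\neq\tau$, so that copy is no longer real. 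In parallel, the client appends a single fresh encrypted copy of $a$ (with value $recE[1]$ for reads or $writeV$ for writes) and its tag $\tau$ to the element/tag buffers. Thus the count of real copies of $a$ goes from $k\in\{0,1\}$ to $k$: either the old real copy was replaced by the new one in the buffer, or (when $a$ never existed) a single real copy is created for the first time. Addresses other than $a$ are untouched, since all other PIR-writes in the Access point at position $0$, which stores no real element.

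For a Rebuild of level $j\neq L$ ($\Pi_{Rebuild}$), every non-empty element from the buffer, stash, and levels $\ell,\dots,j-1$ is downloaded, re-encrypted, and re-inserted by \textbf{Insert}$(a,v,j)$; dummies remain dummies (they carry $\tau_{0}\oplus\tau_{1}\neq F(tk,a)$ and are explicitly demoted to $dummyE$ on line~\ref{AlRebL10}), and reals remain reals. Hence the multiset of real copies is merely relocated, so the invariant is preserved. For the Rebuild of level $L$ ($\Pi_{RebuildL}$), the same download is performed and every dummy is discarded while reals are shuffled and re-inserted at level $L$. Since the invariant held before the Rebuild, there is at most one real copy of each address being re-inserted, hence at most one after the Rebuild.

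The main obstacle I anticipate is the Access step: I need that the search really does locate the unique real copy whenever one exists, so that I can guarantee it is then demoted to dummy and not left co-existing with the newly appended buffer copy. This reduces to two sub-claims: (i) cuckoo lookup correctness at every non-$\ell$ level, which is guaranteed by the $Build$/$Lookup$ correctness of $\Pi_{CuckHash}$ and the negligible build-failure probability of Theorem~1 applied with table size $\Omega(\log^2 N)$ and stash size $\Theta(\log N)$; and (ii) the fact that between two consecutive Accesses the only places a real copy of $a$ can reside are precisely the ones scanned by $\Pi_{Access}$ (buffer, stash, and the cuckoo tables at levels $\ell,\dots,L$), which follows from the structural description of the Rebuild protocols. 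Once these two sub-claims are in hand, the inductive step closes and the lemma is established.
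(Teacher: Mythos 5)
Your proof is correct and rests on the same core mechanism as the paper's: the located copy is demoted by overwriting its stored tag with $\tau\oplus\tau_{rand}\neq F(tk,a)$, and the rebuild phase then turns every tag-mismatched element into a dummy, so only the freshly appended buffer copy survives as real. Where you differ is in organization: the paper gives a two-sentence sketch, whereas you prove a \emph{strictly stronger} global invariant (at most one real copy of each address anywhere in storage) by an explicit induction over Setup/Access/Rebuild, and you correctly note that pseudo-writes target position~$0$, which stores no real data. That is a cleaner and more auditable presentation, and the stronger invariant is arguably what one should be proving anyway, since the lemma as stated is only a per-level statement.

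The obstacle you flag, however, is not fully discharged by your two sub-claims. Cuckoo lookup correctness (i) and "a real copy only resides in scanned locations" (ii) together guarantee that the scan \emph{eventually} reaches the real copy, but $\Pi_{Access}$ stops at the \emph{first} element whose decrypted address equals $a$ — it never consults the tag when deciding $found$. A stale copy at some level $k$ still carries address $a$ (only its tag was randomized) until the rebuild that sweeps level $k$ into level $k+1$ converts it to $dummyE$. If such a stale copy could ever sit at a shallower scan position than the real one, the scan would wrongly demote the stale copy's tag (harmless) while leaving the true real copy untouched, and then append a second real copy — breaking your invariant. The fact that this cannot happen is essentially the paper's Lemma~3: the real copy is always written to the shallowest scan location (the end of the buffer), and every rebuild that carries the real copy down to level $j$ simultaneously gathers and de-addresses all stale copies at levels $<j$, so the real copy is always at least as shallow as any surviving copy with address $a$. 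You need to state and use this shallowness/freshness fact explicitly to close the Access case; once added, your induction goes through.
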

\begin{proof}
\normalfont
    In the access phase, the client writes the accessed element back to the element buffer $EB$ and assigns it a new random tag. In the rebuild phase, the client updates the elements according to their tags. Specifically, if the computed tag matches the accessed tag, the client writes the element back to the servers without modification. Otherwise, the element is transformed into a dummy and written back to the servers. This ensures that the real element corresponding to an access will always be replaced by a dummy during the rebuild phase. As a result, real elements with the same virtual address will not appear simultaneously at the same level. 
\end{proof}

\begin{lemma}
\normalfont
    The freshest elements must appear in the element buffer $EB$ in reverse order, or in the hash tables $ET_{\ell}$, \dots, $ET_L$ of each level, in sequence. In other words, the first accessed element matching the query must be the latest one written in ORAM.
\end{lemma}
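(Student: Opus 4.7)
The plan is to induct on the sequence of Setup, Access, and Rebuild events, maintaining the following invariant $(\star)$: for every virtual address $a$ that has ever been written, there is exactly one \emph{real} copy in storage (an element whose address field equals $a$ and whose reconstructed tag equals $F(tk,a)$), and every other element whose address field equals $a$ (a \emph{stale} copy) lies strictly later than the real copy in the freshness order scanned by Protocol~\ref{AlgoAccess}, namely $EB$ from its largest index to its smallest, then $ES$ likewise, then $ET_\ell, ET_{\ell+1},\ldots,ET_L$ in that order. The base case is immediate from Protocol~\ref{AlgoInitial}: after setup, every $a$ has a single real copy placed at $ET_L$ or, in case of cuckoo overflow, in $ES$, and no stale copies exist.

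For an access to $a$, the invariant guarantees that the real copy lies at some freshness slot $p$ while all prior stale copies lie strictly later; the scan in lines~\ref{AlgoAcLine1}--\ref{AlgoAcLine23} therefore hits $p$ first, sets $found$ and $recE$, and the subsequent PIR-write (line~\ref{AlgoAcLine13}, \ref{AlgoAcLine14}, or~\ref{AlgoAcLine24}) scrambles the tag at $p$, demoting that copy to stale. Line~\ref{AlgoAcLine28} then appends a fresh real copy to $EB$ at its largest index, which is strictly earlier in the freshness order than every other copy of $a$, so $(\star)$ is restored. For a rebuild into level $j<L$ (Protocol~\ref{AlgoRELL}), all elements at freshness slots up through $ET_{j-1}$ are drawn into the rebuild pool; Lemma~2 implies the pool contains at most one element whose reconstructed tag equals $F(tk,a)$, and line~\ref{AlRebL10} converts every other same-address element into $dummyE$, stripping its address field. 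Hence at most one copy of $a$ lands in the new $ET_j$, and any stale copies at levels $m>j$ remain strictly later than $ET_j$; the rebuild of level $L$ (Protocol~\ref{AlgoRL}) is analogous and additionally purges dummies, which only strengthens $(\star)$.

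The lemma then follows because the address-field match test $r_{ik}[0]\equiv a$ performed during the scan succeeds only on the real copy or on stale copies not yet stripped by a rebuild, and $(\star)$ places all such stales strictly later than the real copy; thus the first match encountered is the real copy, which by construction holds the value of the most recent write to $a$.

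The main obstacle will be the book-keeping when repeated accesses to the same address between two rebuilds of level $\ell$ pile several stale copies into $EB$: one must verify that the reverse-order scan of $EB$, the PIR tag overwrite at $posEB$, and the append in line~\ref{AlgoAcLine28} cooperate so that the freshest copy always occupies the largest index of $EB$ at all intermediate moments. The rebuild case also requires a careful appeal to Lemma~2 to guarantee that the rebuild pool contains at most one correctly tagged duplicate, ensuring the tag-based $dummyE$-conversion neither demotes the real copy nor leaves two real copies at the new $ET_j$.
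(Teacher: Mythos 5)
Your proof is correct in substance, but it is considerably more rigorous than the paper's. The paper's own proof is a brief, informal ordering argument: it observes that each access appends the freshest copy to $EB$ at a larger index, that a full $EB$ triggers a rebuild that pours its contents into $ET_\ell$, that a full $\ell$th level similarly spills into the next level, and concludes that elements therefore age monotonically from $EB$ (reverse order) through $ET_\ell,\dots,ET_L$. The paper never explicitly invokes the tag-scrambling mechanism, never appeals to Lemma~2, and never carries an invariant through the rebuild phase; it simply describes the data flow and asserts that the claim follows, leaving the reader to supply the argument for the ``in other words'' sentence. Your version fills that gap: by maintaining the invariant $(\star)$ that each address has exactly one correctly tagged copy, with all stale copies strictly later in the scan order, and by tracking how Access (tag-scramble plus $EB$ append) and Rebuild (tag-verification plus $dummyE$ conversion) preserve it, you actually prove that the first match in the scan is the most recent write, rather than merely that storage is time-ordered.

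One small technical slip: when arguing that the rebuild pool for level $j<L$ contains at most one element with reconstructed tag $F(tk,a)$, you cite Lemma~2. But Lemma~2 only asserts that two real copies of $a$ cannot coexist \emph{at the same level}; it does not by itself preclude real copies of $a$ at distinct levels $<j$ from both entering the pool. The correct citation here is your own inductive hypothesis $(\star)$, which says there is globally a unique real copy of $a$ across all storage. Since you are already running an induction, this is a harmless substitution, but the phrasing should appeal to the invariant rather than to Lemma~2 to close the argument cleanly. Also worth noting (as you implicitly did in listing the scan order): the lemma's statement omits $ES$, but the actual scan in Protocol~2 visits $ES$ between $EB$ and $ET_\ell$; your invariant correctly places $ES$ in the freshness order, which the paper's statement and proof both elide.
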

\begin{proof}
\normalfont
    For each access request, the accessed element is added into an empty location in the element buffer $EB$ in sequence. As elements are added, those with larger indexes are fresher. Once the number of elements in $EB$ reaches $\log{N}$, the protocol triggers the rebuild of the $\ell$th level, transferring all fresh elements from $EB$ to the $\ell$th level. Once the $\ell$th level is full, the client continues by writing the elements to the next level, and so on. Consequently, new elements from smaller levels are always rebuilt into the next empty level. Therefore, the freshest elements must appear in $EB$ in reverse order, or in the hash tables $ET_{\ell}$, \dots, $ET_L$ of each level, in sequence, as the levels are rebuilt.  
\end{proof}

\subsection{Security}

We claim that Cforam is secure based on the pseudorandomness of PRF, the security of PIR, and the negligible build failure probability of cuckoo hashing. We define a sequence of hybrid games to show that the real world and the ideal world are computationally indistinguishable. 

In Cforam, (i) symmetric encryption ensures the indistinguishability of elements; (ii) all tags are stored on two non-colluding servers in a secret-shared manner and cannot be recovered by one of these two servers. As a result, the transmitted and stored elements and tags (i.e., $Out_{b}$ and $EX_{b}$) do not reveal any additional information to the adversary. Therefore, we do not further discuss the security of $Out_{b}$ and $EX_{b}$ in the following proof. Instead, we focus on proving the indistinguishability of the access pattern $AP_{b}$.

\smallskip

\noindent \textbf{Game 1 (real world).} This is the game $\mathbf{Real}_{\mathcal{A},\mathcal{C}}^{\Pi_{ORAM}}$ in the real world, where we implement the setup protocol using Protocol \ref{AlgoInitial} and the access protocol using Protocol \ref{AlgoAccess}. For each operation, the corresponding real protocol is executed, and the adversary receives the output of the execution.
\smallskip

\noindent \textbf{Game 2.} Game 2 differs from Game 1 only during the access phase (i.e., Protocol \ref{AlgoAccess}). For each access, the simulator first retrieves all elements from the element buffer $EB$ and element stash $ES$ without recording any additional information (corresponding to lines \ref{AlgoAcLine1}-\ref{AlgoAcLine12}). The simulator then performs pseudo-write operations using the write-only PIR, writing a random tag at position $0$ in the tag buffer $TB$ and stash $TS$ (replace lines \ref{AlgoAcLine13}-\ref{AlgoAcLine14}). Next, the simulator requests random positions from non-empty levels of the element area using the read-only PIR (replace lines \ref{AlgoAcLine15}-\ref{AlgoAcLine23}) and performs pseudo-writes in the tag area using the write-only PIR (replace line \ref{AlgoAcLine24}). Finally, the simulator sends an encrypted dummy element and a random secret-shared tag to two servers (replace line \ref{AlgoAcLine28}).

\begin{theorem}
\normalfont
    Game 2 is computationally indistinguishable from Game 1 based on the security of the read-only PIR and the write-only PIR.
\end{theorem}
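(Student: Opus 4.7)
The plan is to proceed by a standard hybrid argument, defining a sequence of intermediate games that interpolate between Game 1 and Game 2, with each consecutive pair shown to be computationally indistinguishable by a black-box reduction to either the read-only PIR security or the write-only PIR security. Since the adversary's view in $AP_{b}$ consists essentially of the DPF-based PIR queries (i.e., $k_{b}$, plus the re-randomization values $v_{b}$ for writes) together with the memory-access trace induced by the $Eval$ computation, it suffices to argue that every such query sent in Game 1 can be replaced by the corresponding "position-$0$/pseudo" query sent in Game 2 without detection.

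First I would introduce a sequence of hybrids following the chronological order of PIR invocations inside one invocation of $\Pi_{Access}$: $H_{1}$ replaces the $\Pi_{WPIR}.Write$ on $\llbracket TB\rrbracket$ at index $posEB$ with a pseudo-write on index $0$ and a random tag difference; $H_{2}$ does the same for the stash $\llbracket TS\rrbracket$; then for each level $i\in\{\ell+1,\dots,L\}$ with $full_{i}=1$ I define $H_{3,i}^{R0}, H_{3,i}^{R1}$ that replace the two $\Pi_{RPIR}.Read$ calls on $\langle ET_{i0}\rangle, \langle ET_{i1}\rangle$ at the hash-derived positions $(posR_{i0},posR_{i1})$ by reads at freshly sampled uniform positions, and $H_{3,i}^{W0}, H_{3,i}^{W1}$ that replace the subsequent $\Pi_{WPIR}.Write$ calls on $\llbracket TT_{i0}\rrbracket, \llbracket TT_{i1}\rrbracket$ with pseudo-writes at index $0$ using a random tag-difference. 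Finally a last hybrid $H_{\mathrm{last}}$ replaces the element $sendE$ pushed into $\langle EB\rangle$ by an encrypted dummy and the tag share $\tau_{b}$ by a random share; this last hop is subsumed by the pseudorandomness of the underlying PRF/encryption used for elements and the XOR-secret-sharing of $\tau$ (both established outside of this theorem, as noted in the paragraph preceding the statement).

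For each hop between consecutive hybrids, I would give a tight reduction: a distinguisher $\mathcal{D}$ for that hop yields a distinguisher for the corresponding PIR instance by embedding the challenge PIR query into the one call that differs between the two hybrids, while simulating all other PIR calls honestly using internally generated randomness (the simulator possesses all keys, tags, and positions for the calls it does not forward). Because the $\Pi_{RPIR}$ security guarantee states that the per-server view $q_{b}$ at index $i$ is computationally indistinguishable from $q_{b}$ at any index $j$, and analogously for $\Pi_{WPIR}$ on $(q_{b},v_{b})$, each hop is bounded by a single PIR advantage term. Summing the $O(\log N)$ hops per access and then across the polynomially many access queries produced by $\mathcal{A}$ yields an overall advantage that remains $\mathrm{negl}(\lambda)$, which gives the claim.

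The main obstacle, I expect, is handling the correlations introduced by the fact that the position used for $\Pi_{RPIR}.Read$ at level $i$ and the position used for the subsequent $\Pi_{WPIR}.Write$ at level $i$ are \emph{not independent}: when the target element is located at level $i$, $posW_{ib}=posR_{ib}$, otherwise $posW_{ib}=0$. I would handle this by ordering the hybrids so that the read at level $i$ is switched to uniform \emph{before} the matching write is switched to a pseudo-write; once the read position is uniform (and the element at that position has been marked dummy from the server's view by symmetric-encryption security and tag-share secrecy), the indicator "did the client find the element here?" becomes independent of $\mathcal{A}$'s view, so the write's position input can be swapped to $0$ via a single $\Pi_{WPIR}$ invocation whose joint distribution with the other servers' messages is handled by the reduction. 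A secondary subtlety is that the tag $\tau=F(tk,a)$ and the random $\tau_{rand}$ in the $v_{b}$ component of each PIR-write must be argued to look independently random to a single corrupted server; this follows from the XOR secret-sharing of $\tau$ and the freshness of $\tau_{rand}$ per access, both of which are bookkeeping but must be threaded carefully through every hybrid.
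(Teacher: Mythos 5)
Your hybrid argument spells out the paper's high-level appeal to PIR security as a sequence of one-query swaps, reducing each hop to $\Pi_{RPIR}$ or $\Pi_{WPIR}$ indistinguishability and summing $O(\log N)$ hops per access over polynomially many accesses; this is essentially the same approach the paper takes, only made explicit. One remark: the ordering constraint you impose (switch the read at level $i$ to uniform before the matching write is switched to a pseudo-write at index $0$) is not actually required, because the reduction holds the database and all honest keys and can therefore recompute the true ``found'' indicator and the true $posW_{ib}$ internally regardless of which earlier queries have already been swapped in the hybrid sequence, so each hop is a standalone one-index replacement and the order is immaterial---though imposing it does no harm.
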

\begin{proof}
\normalfont
    In Game 2, the simulator performs the same operations as in Game 1 for accessing the element buffer $EB$ and element stash $ES$. The difference between Game 1 and Game 2 begins when the simulator proceeds to update the tag buffer $TB$ and stash $TS$. In Game 2, the simulator uses write-only PIR to perform a pseudo-write operation instead of writing to specific locations as in Game 1. Then, instead of accessing specific locations of each non-empty level, the simulator retrieves random positions from the element area and updates position $0$ of the tag area using two-server PIR. Due to the security of PIR, the adversary cannot distinguish between the actual read/write locations and the randomly chosen ones. Thus, Game 2 is computationally indistinguishable from Game 1.
\end{proof}

\noindent \textbf{Game 3.} In this game, only the rebuild phase is modified, while all other processes remain the same as in Game 2. (i) When rebuilding the $j$th level where $j \neq L$ (i.e., Protocol \ref{AlgoRELL}), the simulator sequentially accesses non-empty elements from the previous $j-1$ levels, one by one (same as lines \ref{AlRebJ1}-\ref{AlRebJ8}). For each non-empty element, the simulator invokes the \textbf{Insert} algorithm to place the element at the $j$th level. However, unlike in Game 2, where positions are determined using PRF-based hash functions, in Game 3 the simulator randomly selects the positions and sends the encrypted element, tag, and the randomly selected positions to the servers. The two servers then place the element and tag in the corresponding positions according to the cuckoo strategy. (ii) When rebuilding the $L$th level (i.e., Protocol \ref{AlgoRL}), the simulator first receives all non-empty elements and tags from server $\mathcal{S}_{0}$ and non-empty tags from server $\mathcal{S}_{1}$ (same as lines \ref{AlgoRLLine1}-\ref{AlgoRLLine2}). Then, the simulator re-encrypts each element and sends it to $\mathcal{S}_{0}$ without modifying it (replace lines \ref{AlgoRLLine3}-\ref{AlgoRLLine4}). Next, the simulator receives the shuffled elements from $\mathcal{S}_{0}$, removes dummy elements, and sends the remaining $N$ real elements to $\mathcal{S}_{1}$ (same as lines \ref{AlgoRLLine5}-\ref{AlgoRLLine8}). Finally, the simulator receives each shuffled element from $\mathcal{S}{1}$ and invokes the \textbf{Insert} algorithm to reconstruct the $L$th level. Similarly to the previous step, the simulator randomly generates positions for each element instead of using hash functions.

\begin{theorem}
\normalfont
    Game 3 is computationally indistinguishable from Game 2 based on the pseudorandomness of PRF and the negligible build failure probability of cuckoo hashing.
\end{theorem}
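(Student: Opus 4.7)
The plan is to interpolate between Game 2 and Game 3 through a single intermediate hybrid and reduce the step from Game 2 to that hybrid to PRF security, while handling the step from the hybrid to Game 3 by a statistical argument using the cuckoo hashing build failure bound (Theorem~1). The only real change between the two games lies in how positions are generated during every \textbf{Insert} call inside a rebuild: Game~2 uses $(pos_{i0}, pos_{i1}) \gets H(hk_i, \tau)$ with $hk_i \gets F(lk, i, epoch_i)$, whereas Game~3 samples these positions uniformly at random.

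First I would introduce Hybrid $H_1$, identical to Game~2 except that for every level $i$ and every epoch value $epoch_i$ used during the experiment, the value $F(lk, i, epoch_i)$ is replaced by a freshly sampled uniformly random string of the appropriate length. The key $lk$ is held only by the client and never transmitted, and because the global counter $ctr$ is strictly increasing and each $epoch_i$ is derived deterministically from $ctr$, the pairs $(i, epoch_i)$ queried to $F(lk, \cdot, \cdot)$ throughout the experiment are all distinct (and after a full $L$-level rebuild the client samples a fresh $lk$, which only makes the argument easier). A standard hybrid over the polynomially many PRF evaluations then yields $\bigl|\Pr[\mathcal{A}=1 \text{ in Game 2}] - \Pr[\mathcal{A}=1 \text{ in } H_1]\bigr| \le \mathrm{negl}(\lambda)$ by pseudorandomness of $F$.

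Next I would argue that $H_1$ and Game~3 are statistically indistinguishable. In $H_1$, each $hk_i$ is a truly uniform string, so modeling $H$ as keyed by $hk_i$ we get that $(pos_{i0}, pos_{i1}) = H(hk_i, \tau)$ is distributed as a pair of independent uniform positions in $\{0,\dots,Len_i-1\}$, exactly matching the simulator's direct sampling in Game~3. The resulting positions, elements, and cuckoo-placement decisions are therefore identically distributed in the two games, conditioned on the rebuild not failing. By Theorem~1, since each level's cuckoo table satisfies $Len_i = \Omega(\log^2 N)$ with stash $\Theta(\log N)$, the per-rebuild failure probability is $\mathrm{negl}(N) = \mathrm{negl}(\lambda)$; a union bound over the polynomially many rebuilds performed throughout the experiment keeps the total statistical distance negligible.

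The main obstacle, and the step that requires the most care, is justifying that the distribution of the tag $\tau = F(tk, a)$ that feeds into $H(hk_i, \tau)$ does not break the reduction. Since $tk$ is independent of $lk$, the PRF hybrid replacing $F(lk, \cdot, \cdot)$ leaves $\tau$'s distribution untouched, and uniformity of $hk_i$ alone already forces $H(hk_i, \tau)$ to be uniform over positions for any fixed $\tau$. One must also verify that the rest of each game's transcript, namely the non-empty elements and tag shares streamed from $\mathcal{S}_0$ and $\mathcal{S}_1$ during Protocols \ref{AlgoRELL} and \ref{AlgoRL}, the re-encryption step, and the two shuffle interactions at the $L$-th level, is syntactically identical in $H_1$ and Game~3, so that no further reduction is needed beyond the two transitions above. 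Chaining the two bounds via the triangle inequality then completes the proof.
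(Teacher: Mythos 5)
Your proposal follows essentially the same two-step strategy as the paper: use PRF pseudorandomness to argue that the hash-derived positions during \textbf{Insert} look uniformly random, then invoke the cuckoo build-failure bound (Theorem~1) so that conditioning on successful builds costs only negligible statistical distance. Your explicit intermediate hybrid $H_1$ is a clean formalization of what the paper argues in one breath, and the two supporting observations you add --- that the pairs $(i, epoch_i)$ fed to $F(lk,\cdot,\cdot)$ are pairwise distinct because $ctr$ is strictly increasing and $epoch_i$ is a deterministic function of it, and that $tk$ and $lk$ are independent keys so the PRF substitution on $F(lk,\cdot,\cdot)$ leaves $\tau = F(tk,a)$ undisturbed --- are both correct and tighten the reduction relative to the paper's prose. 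Like the paper, you implicitly rely on $H$ keyed by a uniform string behaving as a random function to conclude uniform positions; the paper has the same implicit assumption, so you match its level of rigor.

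The one place where you are slightly imprecise is the $L$-th level rebuild. You assert that the non-empty elements streamed from $\mathcal{S}_0$, the re-encryption step, and the two shuffle interactions in Protocol~\ref{AlgoRL} are \emph{syntactically identical} in $H_1$ and Game~3. They are not: in $H_1$ (which inherits Game~2's rebuild, i.e., the real Protocol~\ref{AlgoRL}) the client performs the tag check in line~\ref{AlgoRLLine4} and converts stale elements to $dummyE$ before re-encrypting, whereas in Game~3 the simulator re-encrypts every element unmodified. The underlying plaintexts therefore differ. This does not open a real gap --- the paper deliberately scopes the proof to the access pattern $AP_b$ and delegates content indistinguishability to the symmetric encryption, and it separately invokes the cross-server shuffle so that a single corrupted server cannot correlate positions across the dummy-removal step --- but you should argue computational indistinguishability of those messages (via re-encryption plus the shuffle) rather than asserting syntactic identity. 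The paper's part~(ii) of this proof is devoted precisely to that point, and your write-up glosses over it.
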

\begin{proof}
\normalfont
    (i) When rebuilding the $j$th level where $j \neq L$, the simulator sequentially accesses each non-empty element as in Game 2. The key difference is that, instead of using PRF-based hash functions to compute positions, the simulator randomly selects positions and sends these positions along with the elements to the servers. Since PRF is computationally indistinguishable from a truly random function, the adversary cannot distinguish between these two methods based on location information. Additionally, during the rebuild, up to $O(\log{N})$ elements may overflow and be placed in the $\ell$th level. The $\ell$th level contains an $O(\log{N})$ stash, satisfying the condition of Theorem 1 and having a negligible build failure probability. Thus, the adversary cannot distinguish between Game 2 and Game 3 during the rebuild of levels $j \neq L$. (ii)
    When rebuilding the $L$th level, two servers send the elements and tags to the simulator as in Protocol \ref{AlgoRL}. The simulator then sends elements to server $\mathcal{S}_0$, receives shuffled elements, removes dummies, and re-sends the remaining real elements to server $\mathcal{S}_1$. The simulator then invokes the \textbf{Insert} algorithm to reconstruct the $L$th level. Since the elements are shuffled between the servers before reconstruction, a single server cannot distinguish between elements before and after removal. Moreover, during the insertion process, the simulator sends random positions to each server. Using the same reasoning as in (i), we can conclude that Game 2 and Game 3 are indistinguishable during the rebuild of the $L$th level. Therefore, Game 3 is computationally indistinguishable from Game 2.

\end{proof}

\noindent \textbf{Game 4 (ideal world).} Game 4 is similar to Game 3, except during the execution of the Setup, where the following changes occur: (i) the simulator randomly generates $|X_{b}|$ random real elements for encryption, instead of using the elements from the array $X_{b}$; (ii) the simulator selects random positions for each element for insertion, rather than those determined by the hash functions.

\begin{theorem}
\normalfont
    Game 4 is computationally indistinguishable from Game 3 based on the pseudorandomness of PRF and the negligible build failure probability of cuckoo hashing.
\end{theorem}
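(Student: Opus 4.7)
The plan is to interpolate between Games 3 and 4 with a single intermediate hybrid, call it Game 3.5, in which the Setup of Game 3 is modified only by replacing the PRF-derived hash keys $hk_L = F(lk, L, epoch_L)$ and $hk_\ell = F(lk,\ell, epoch_\ell)$ together with the tag derivation $\tau = F(tk, a)$ by lazily sampled truly random functions, while keeping the plaintext elements drawn from $X_b$. The insertion positions, which are the only Setup-time values observable to the corrupted server beyond encrypted payloads, are of the form $H(hk_\cdot,\tau)$, so replacing the PRFs by random functions makes these positions uniformly distributed, matching exactly the sampling rule used in Game 4.

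First I would reduce distinguishing Game 3 from Game 3.5 to the pseudorandomness of $F$ by a standard two-key hybrid. Since neither $lk$ nor $tk$ ever leaves the client and both are accessed only through $F$, a PRF distinguisher can perfectly simulate Setup using a PRF/random-function oracle, giving $\lvert\Pr[G_3{=}1]-\Pr[G_{3.5}{=}1]\rvert \leq 2\cdot\mathrm{Adv}^{\mathrm{PRF}}_F(\lambda)$. The hash function $H$ introduces no extra assumption because it is a deterministic public function of its inputs.

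Second, I would argue Game 3.5 and Game 4 produce identically distributed access patterns $AP_b$ in Setup. With truly random functions, the positions written by \textbf{Insert} at the $L$th level (and any overflow placed at the $\ell$th level) are uniform and independent across distinct virtual addresses, which is exactly the distribution produced by the simulator in Game 4. The remaining difference, that Game 4 encrypts freshly random plaintexts rather than the entries of $X_b$, is absorbed by the authors' opening reduction of $Out_b$ and $EX_b$ to symmetric-encryption security, so it does not affect the access-pattern view.

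The main obstacle will be controlling the cuckoo build behaviour consistently across the hybrids, since the state bits $(full_L, full_\ell, lenS)$ sent back to the client — and hence visible to the adversary through the Setup message $m$ — depend on whether overflow into the $\ell$th level occurs and on the stash occupancy. I would invoke Theorem~1: the $L$th-level table has size $\Omega(\log^2 N)$ and the $\ell$th-level acts as a stash of size $\Theta(\log N)$, so the build-failure probability is $\mathrm{negl}(N)=\mathrm{negl}(\lambda)$. Conditioning on non-failure, the distribution of $(full_L,full_\ell,lenS)$ is identical in Game 3.5 and Game 4 because the positions are drawn from the same uniform distribution in both; the negligible failure event contributes only an additive $\mathrm{negl}(\lambda)$. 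A triangle inequality over the two hybrids then yields $\lvert\Pr[G_3{=}1]-\Pr[G_4{=}1]\rvert\leq\mathrm{negl}(\lambda)$, completing the argument and, together with the earlier hybrid transitions, the oblivious simulation security of $\Pi_{ORAM}$.
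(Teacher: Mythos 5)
Your proposal matches the paper's own argument: both proofs use the pseudorandomness of $F$ to replace the PRF-derived insertion positions by uniformly random ones, fold the change from the entries of $X_b$ to fresh random plaintexts into the symmetric-encryption indistinguishability that the security section discharges up front, and invoke Theorem~1 so that the cuckoo build state $(full_L, full_\ell, lenS)$ visible to the adversary differs only on a $\mathrm{negl}(\lambda)$-probability failure event. Your intermediate Game~3.5 simply makes explicit, as a two-key hybrid, the PRF-to-random-function step that the paper presents informally in a single paragraph.
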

\begin{proof}
\normalfont
    Replacing the real elements of the database with random elements does not affect the overall distribution of the database, as the symmetric encryption can preserve the indistinguishability of the elements. Additionally, the positions of the elements are now chosen randomly instead of being derived from the PRF-based hash functions. Due to the pseudorandomness of PRF, the adversary cannot distinguish between these two games based on position information. Finally, the build failure probability is negligible based on Theorem 1. Thus, Game 4 $\equiv$ Game 3.
\end{proof}

In Game 4, the simulator can simulate the whole protocol with knowledge of $|X_{b}|$ only\footnote{We have not explicitly mentioned other public parameters, such as the security parameter $\lambda$, the size of the cuckoo hashing table, block size, etc.}. In other words, Game 4 simulates the game $\mathbf{Ideal}_{\mathcal{A},\mathcal{C}}^{Sim}$ of the ideal world. Therefore, we can conclude that Game 1 (Real) $\equiv$ Game 2 $\equiv$ Game 3 $\equiv$ Game 4 (Ideal) and complete the proof.


\section{Correctness and Security of Cforam+}\label{AppCforam+}

\subsection{Correctness} 
Since we only adjust the element reading and tag writing of the $i$th level for $i\in\{\ell+1,...,L\}$, we need to prove that the optimized reading and writing maintain the same functionality as Cforam. We discuss the following three situations.
\smallskip

\noindent\textbf{Situation 1.} Assume that the required element is stored in the EB, ES, or the $\ell$th level. In Cforam+, first, the client retrieves the required elements and updates the corresponding tags in TB, TS, and $TT_{\ell}$, just like in Cforam. Second, the client performs the optimized \textbf{element reading} operations for each level $i \in \{\ell + 1, \dots, L\}$, obtaining parameters $fTab = 0$, $fPos = 0$, and $fLen = 0$. Third, during the \textbf{tag writing} phase, each server receives a $0$th unit vector $V_{0}$ after the DPF evaluation. The server then splits the vector into two equal-length sub-vectors. For the second table of $i$th level, the tag remains unchanged since all values in the second sub-vector are always secret-shared $0$. For the first table of $i$th level, the client performs a cyclic left shift operation to obtain $V_{ind}$, where:
\begin{equation}
    \begin{aligned}
        ind &= (0-Len_{i})\ mod\ 2Len_{L} \\
        &\geq{Len_{L}}.
    \end{aligned}
\end{equation}
We observe that the tag also remains unchanged, as all values in the vector $V[:Len_{i}-1]$ are secret-shared $0$.

We conclude that in Cforam+, the client can retrieve the required elements from EB, ES, or the $\ell$th level without altering the tag of the $i$th level for $i \in \{\ell + 1, \dots, L\}$. This ensures that Cforam+ achieves the same functionality as Cforam.

\smallskip

\noindent\textbf{Situation 2.} Assume that the required element is stored at position $posR_{i1}$ in the \textit{second} table of the $i$th level, where $i \in \{\ell + 1, \dots, L\}$, i.e., $fTab = 1$, $fPos = posR_{i1}$, and $fLen = Len_{i}$.

For \textbf{element reading}, in Cforam, the client generates a unit vector $V_{posR_{i1}}$ and sends the secret-shared version of $V_{posR_{i1}}$ to two servers. The two servers perform DPF evaluation and XOR-based inner product operations based on $V_{posR_{i1}}$. In Cforam+, the client sends the secret-shared unit vector $V_{rPos_{1}}$ to two servers given a random position $rPos_{1}$. Each server performs XOR computations to access $V_{rPos_{1} \mod Len_{i}}$. Subsequently, the two servers cyclic left-shift the vector by $offset_{1}$ to access a new vector $V_{ind}$ such that:

\begin{equation}
    \begin{aligned}
        ind &= \big((rPos_{1}\ mod\ Len_{i})-offset_{1}\big)\ mod\ Len_{i} \\
        &= (posR_{i1}-Len_{i})\ mod\ Len_{i} \\
        &= posR_{i1}.
    \end{aligned}
\end{equation}

Thus, the servers obtain the same vector and do the same computations as in Cforam.

For \textbf{tag writing}, the servers get the following vector after shifting for each level:

\begin{itemize}[leftmargin=*]
    \item For the second table of the $i$th level, the two servers obtain the vector $V_{ind}$ such that:
    \begin{equation}
        \begin{aligned}
            ind &= (fPos+fLen-Len_{i})\ mod\ 2Len_{L} \\
            &= posR_{i1}.
        \end{aligned}
    \end{equation}

    \item For the second table of the $l$th level where $l<i$, the two servers obtain the vector $V_{ind}$ such that:
    \begin{equation}
        \begin{aligned}
            ind &= (fPos+fLen-Len_{l})\ mod\ 2Len_{L} \\
            &\geq{(fPos+2Len_{l}-Len_{l})\ mod\ 2Len_{L}} \\
            &\geq{Len_{l}}.
        \end{aligned}
    \end{equation}

    \item For the second table of the $l$th level where $l>i$, the two servers obtain the vector $V_{ind}$ such that:
    \begin{equation}
        \begin{aligned}
            ind &= (fPos+fLen-Len_{l})\ mod\ 2Len_{L} \\
            &\geq{(fPos+Len_{i}+Len_{L})\ mod\ 2Len_{L}} \\
            &\geq{Len_{L}}.
        \end{aligned}
    \end{equation}

    \item For the first table of each level, each value in vector $V$ is always the secret-shared $0$.
\end{itemize}

To summarize the above results, we observe that the servers always correctly modify the tag of the second table at the $i$th level. For the first table at the $i$th level, the tag remains unchanged, as each value in the vector $V$ is the secret-shared $0$. For all other levels where $l \neq i$, the tag also remains unchanged, since each value in the vector $V[:Len_{l}-1]$ is the secret-shared $0$.

\smallskip

\noindent\textbf{Situation 3.} Assume that the required element is stored at the $i$th level's \textit{first} table where $i\in\{\ell+1,...,L\}$. This is similar to situation 2, and following its process can easily prove the correctness of situation 3.

\subsection{Security}

In this section, we demonstrate that the adversary cannot infer the position of the required element from the optimized element reading and tag writing processes. Note that the other parts remain consistent with Cforam.
\smallskip

\noindent\textbf{Element Reading.} If the required element is stored at the $\ell$th level, the offsets sent to the servers are always random according to our protocol. Now, assume the required element is stored at the $i$th level, where $i \in \{\ell+1, \dots, L\}$. For element reading, the client accesses each level sequentially as follows:

\noindent
\begin{itemize}[leftmargin=*]
    \item When the client requests a level $l < i$, (s)he sends two offsets to the servers based on the positions after computing hash functions. Since the $l$th level does not store the required element, these positions and offsets appear random to the servers due to the pseudorandomness of PRF.

    \item When the client requests a level $l > i$, (s)he sends two random offsets to the servers according to our protocol.

    \item When the client requests the $i$th level, (s)he sends two offsets based on cuckoo hashing. If the client were to send the position information directly to the servers, they could link the position in the access phase with that in the rebuild phase. This would occur because our rebuild algorithm does not obscure which locations store non-empty elements. To preserve security, the client generates random DPF keys for access, ensuring the offsets remain random. 
\end{itemize}

In summary, the offsets at each level appear random to the servers, preventing them from discerning the specific level and position of the required elements. Furthermore, in Cforam+, the same element at a level is never accessed twice before reconstruction. As a result, the offsets remain random for any number of element reads, making it impossible for the servers to identify the required elements.


\smallskip

\noindent\textbf{Tag Writing.} For tag writing, each server only receives a DPF key to compute a secret-shared vector and performs a cyclic shift by a fixed distance for each level. Consequently, the servers cannot distinguish the location at which the tag is modified.

\section{Supplementary Work}\label{AppDiss}

\noindent\textbf{Applications of ORAM with low local storage.} Client storage is often limited in many real-world environments, such as on mobile devices and embedded systems. Furthermore, numerous studies \cite{DBLP:conf/ndss/SasyGF18, mishra2018oblix, ahmad2019obfuscuro, reichert2024menhir} have explored extending ORAM to resource-constrained Trusted Execution Environments (TEEs) (e.g., Intel SGX's EPC is limited to 128MB) to minimize local storage overhead. In this scenario, the TEE enclave enables all client computation, storage, and access tasks. However, unlike the traditional client-server ORAM model, where all client-side routines are executed in a fully trusted environment, the adversary in a TEE-based ORAM can observe the page-level access addresses of the enclave \cite{DBLP:conf/sp/XuCP15, DBLP:conf/uss/BulckWKPS17}. This visibility may compromise the security of the ORAM scheme. Therefore, to securely extend ORAM to TEEs, any access involving the TEE’s internal memory must rely on techniques such as linear scans or other oblivious algorithms, which can increase both access latency and computational costs. These additional expenses can be eliminated for the ORAM with constant local storage since linearly scanning the internal memory does not increase the asymptotic complexity. Therefore, it is significantly applicable to construct Client-friendly ORAM with low local storage, especially constant local storage with practical bandwidth.

\smallskip

\noindent\textbf{Comparisons with distributed ORAM.} In standard multi-server ORAM, the client interacts with each server \textit{independently}, with no communication between the servers. A variant of multi-server ORAM is distributed ORAM (DORAM), where the client sends a secret-shared query to multiple servers, then the servers communicate with each other before responding to the client. Many studies have been conducted on DORAM, addressing various aspects such as achieving malicious security \cite{DBLP:conf/ndss/HoangGY20, DBLP:conf/tcc/HemenwayNOSZ23}, optimizing bandwidth \cite{DBLP:conf/tcc/LuO13, DBLP:conf/scn/FalkNO22, DBLP:conf/uss/VadapalliHG23}, and reducing computation costs \cite{DBLP:conf/pkc/HamlinV21, braun2023ramen}. Our research primarily focuses on the standard two-server ORAM model, where the client interacts with each server separately.

We conducted experiments comparing our two-server ORAM with the concretely efficient two-party DORAM solution, Duoram \cite{DBLP:conf/uss/VadapalliHG23}, as shown in Fig. \ref{fig:duoram}. Experimental results indicate that, compared to Duoram, we achieve approximately $90\%$ savings in bandwidth and reduce the time cost by a factor of $2\sim4\times$. This suggests that, in terms of bandwidth and runtime efficiency, the standard client-server ORAM still holds advantages over existing DORAM solutions. 

It is worth noting that our constructions can also be extended to DORAM scenarios by implementing the client within a secure computation framework \cite{OS97, DBLP:conf/tcc/LuO13}. However, this approach introduces an additional $O(\lambda)$ overhead due to the need for secure computation to evaluate the PRF function. Therefore, an interesting question for future work is whether we can extend our constructions to DORAM without relying on black-box secure computation for evaluating the PRF, while still maintaining competitive performance.

\begin{figure}[!t]
	\centering

    \centering
    \includegraphics[width=0.7\linewidth]{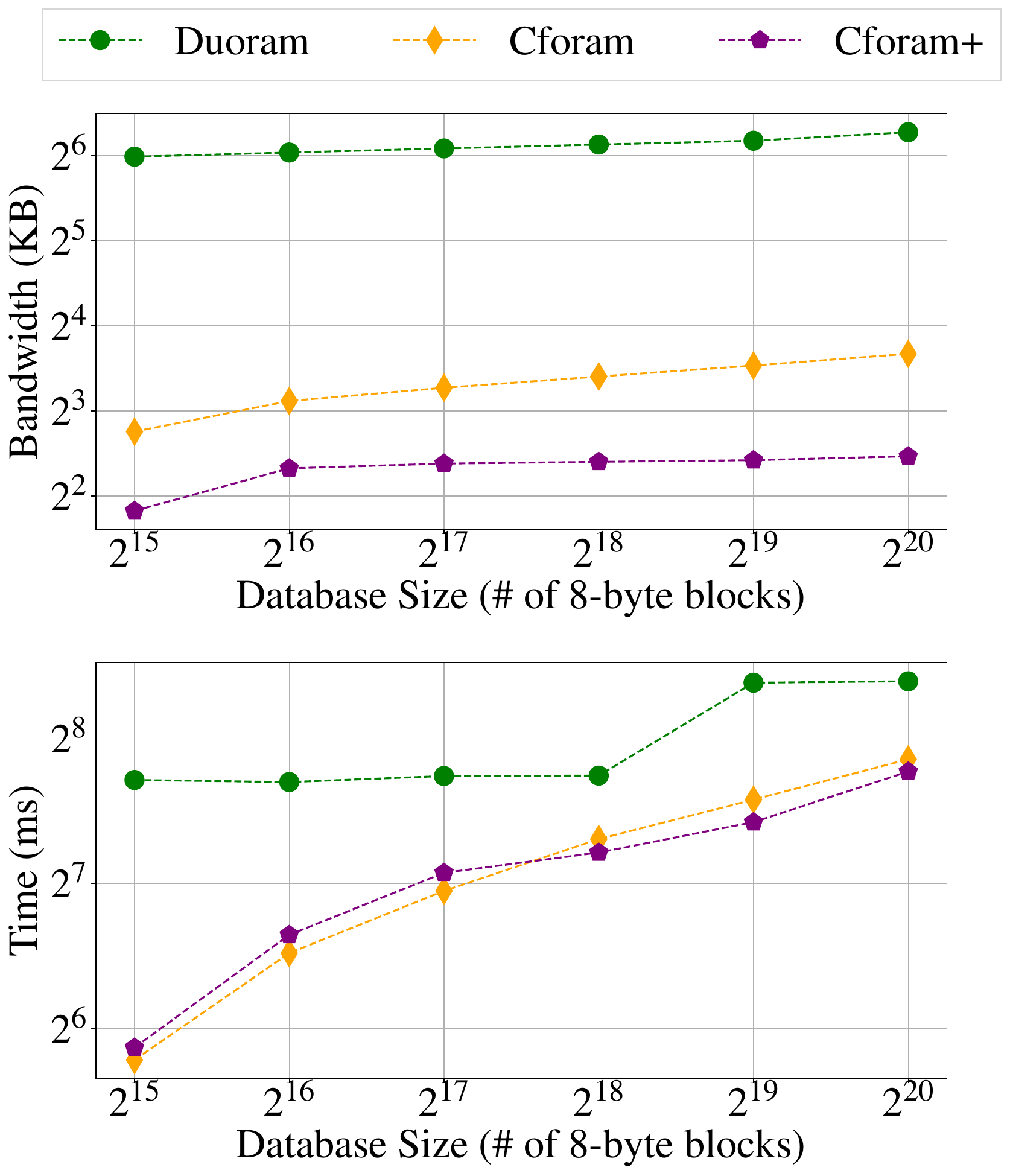}

	\caption{Amortized bandwidth and time cost in the LAN setting compared with Duoram \cite{DBLP:conf/uss/VadapalliHG23}.}
	\label{fig:duoram}
\end{figure}


\smallskip

\noindent\textbf{Setup and rebuild of other schemes.} LO13 \cite{DBLP:conf/tcc/LuO13}, AFN17 \cite{DBLP:conf/pkc/AbrahamFNP017}, and KM19 \cite{DBLP:conf/pkc/KushilevitzM19} do not explicitly describe the setup phase. We provide the following supplements for these schemes:

\begin{itemize}[leftmargin=*]
    \item For LO13 (hierarchical ORAM), we use $\mathcal{S}_{b}$ to represent the server that stores the bottom level $L$. In the setup phase, the client first sends all elements to $\mathcal{S}_{b\oplus{1}}$ in sequence. Then, the server $\mathcal{S}_{b\oplus{1}}$ constructs a hash table to store these elements. Next, the server $\mathcal{S}_{b\oplus{1}}$ sends the hash table to the client, who then forwards it to $\mathcal{S}_{b}$. Finally, the server $\mathcal{S}_{b\oplus{1}}$ deletes the $L$th table.

    \item For AFN17 (tree-based ORAM), during the setup phase, the client performs the following steps in a loop until all elements are placed in the tree: (i) sends $d$ elements to two servers, which store them in the root bucket; (ii) invokes the eviction algorithm after every $d$ elements.

    \item For KM19 (hierarchical ORAM), two servers initially establish the largest level $L$ with a capacity of $O(N)$. Then, the client sends all the elements to the two servers and uses two-tier oblivious hashing \cite{DBLP:conf/asiacrypt/ChanGLS17} to construct the $L$th level.
\end{itemize}

In LO13 and KM19, when the number of accesses equals an epoch, their rebuild protocols will no longer apply to the bottom level because the number of non-empty elements exceeds the ORAM capacity\footnote{Note that LO13 proposed adding one extra level at the bottom for every $O(N)$ access. We show that this is unnecessary and may lead to an infinite increase in storage.}. To address this, we make the following subtle adjustments:

\begin{itemize}[leftmargin=*]
    \item For LO13, assuming that server $\mathcal{S}_{b}$ stores the bottom level, the following process is required: (i) Server $\mathcal{S}_{b\oplus{1}}$ sends all its shuffled elements to the client, who then forwards them to server $\mathcal{S}{b}$; (ii) Server $\mathcal{S}{b}$ combines the received elements with all its own, shuffles the resulting array, and sends it to the client; (iii) The client removes \textit{empty elements} and \textit{dummy elements} from the received array, and only sends the real elements to server $\mathcal{S}{b \oplus 1}$; (iv) Server $\mathcal{S}{b \oplus 1}$ constructs a hash table to store the real elements and sends it to server $\mathcal{S}_{b}$ through the client.

    \item KM19 uses oblivious sorting to construct oblivious hashing. We only need to adjust the oblivious sort algorithm: For non-bottom levels, the sorting algorithm must place real and dummy elements at the front of the array and write them into hash tables. For the bottom level, we adjust the comparison conditions so that dummy elements are always placed after the real elements in order to remove the dummy elements. 
\end{itemize}
\smallskip

\noindent\textbf{Time cost of our schemes.} From the experimental results, we observe that in the LAN setting, the running time of our schemes does not show significant advantages over LO13 \cite{DBLP:conf/tcc/LuO13} and AFN17 \cite{DBLP:conf/pkc/AbrahamFNP017}. This is primarily due to the linear symmetric key computations involved in PIR. Recent work \cite{dan2020DPF, lam2023gpu} has explored how multi-threaded CPUs and GPUs can be leveraged to accelerate DPF-based PIR computations. For instance, results in \cite{lam2023gpu} show that for a database size of $2^{20}$ blocks, a PIR-read operation can be completed in just $1$ ms using C++ compilation and GPU acceleration. This is a positive outcome for the deployment of our schemes. For further details, we refer the reader to \cite{lam2023gpu}.

\bibliographystyle{IEEEtran}
\bibliography{TIFS_ref}

\end{document}